\documentclass[preprint,12pt]{elsarticle}
\usepackage{enumerate}
\usepackage{epsfig,alltt}
\usepackage[english]{babel}
\usepackage{blindtext}
\usepackage{dsfont}
\usepackage{amssymb,amsfonts,graphicx,amsmath,amsthm}
\usepackage{tikz}
\usepackage{diagbox} 
\usetikzlibrary{bayesnet}
\usepackage[colorlinks=true,linkcolor=blue,citecolor=blue,pdfborder={0 0 0}]{hyperref}
\usepackage{natbib}
\usepackage[utf8]{inputenc}
\usepackage{soul}
\everymath{\displaystyle}
\usepackage[left=2cm,right=2.5cm,top=2.5cm,bottom=2.5cm]{geometry}

\usepackage{floatflt}
\usepackage{makeidx}
\usepackage{algorithm}
\usepackage[noend]{algpseudocode}

\usepackage{url}
\usepackage{float} 

\newtheorem{prop}{\bf Proposition}
\newtheorem{lem}{\bf Lemma}

\newtheorem{defi}{\bf Definition}

\theoremstyle{remark}

\usepackage{xcolor}
\usepackage[tikz]{bclogo}
\usepackage{hyperref}
\usepackage{blindtext}
\usepackage{tabularx}
\usepackage{color}

\usepackage{orcidlink}
\newcommand{\begeq}[1]{\begin{equation} \label{#1}}
	\newcommand{\fineq}{\end{equation}}

\usepackage{amsmath,cases}
\usepackage{caption,subcaption}
\begin{document}

\newgeometry{left=2cm, right=2.5cm, top=2.5cm, bottom=2.5cm}

\begin{frontmatter}

\title{Dependent Censoring Based on Geometric Optimization}

\author[aff1]{Anis Fradi\corref{cor1}\,\orcidlink{0000-0002-4333-2318}}

\author[aff2]{Salima Helali}

\author[aff3]{Bilel Bousselmi}

\cortext[cor1]{Corresponding author.\\
  Email: \url{anis.fradi@univ-lyon2.fr} $|$
  ORCID: \url{https://orcid.org/0000-0002-4333-2318}}

\address[aff1]{Universit\'{e} Lumi\`{e}re Lyon~2, Universit\'{e} Lyon~1,
  ERIC, 69007 Lyon, France}
\address[aff2]{Universit\'{e} de Technologie de Compi\`{e}gne, LMAC,
  F-60203 Compi\`{e}gne Cedex, France}
\address[aff3]{\'{E}cole d'Ing\'{e}nieur G\'{e}n\'{e}raliste ESME,
  F-69382 Lyon, France}

\begin{abstract}
In survival analysis, dependent censoring poses significant challenges
in accurately estimating model parameters and survival functions. This
study introduces a novel framework leveraging Extended Generalized
Marshall–Olkin (EGMO) models to address dependent censoring mechanisms.
Geometric optimization techniques are employed to develop efficient
estimation procedures that capture dependencies between failure and
censoring times. We establish their asymptotic properties. Simulation
studies and real-data applications illustrate the method's robustness
and effectiveness.  

%The Python code associated with the paper is available at: \href{https://github.com/anisfradi/Dependent-Censoring-Based-on-Geometric-Optimization}{Dependent-Censoring-Based-on-Geometric-Optimization}.
\end{abstract}

\begin{keyword}
Dependent censoring \sep EGMO models \sep Survival functions \sep
Geometric optimization \sep Maximum likelihood estimation \sep
Newton–Raphson
\end{keyword}

\end{frontmatter}

\section{Introduction}
Survival analysis provides a statistical framework for modeling time‑to‑event data arising in fields such as medicine, reliability engineering, and demography, where the primary interest lies in the distribution of the time until the occurrence of a specified event (for example, death, relapse, or system failure) (\cite{Kal02}; \cite{Kle03}). In many practical applications, the exact event time is not fully observed for all individuals, leading to various forms of censoring, including right, left, and interval censoring, which must be properly accounted for in the analysis (\cite{Fle13}; \cite{Tur76}). Consequently, standard survival methods such as the Kaplan–Meier estimator and the Cox proportional hazards model are typically developed under the assumption that the censoring mechanism is independent of the underlying failure time, conditional on covariates, an assumption that may be violated in real‑world studies (\cite{Kap58}; \cite{Cox72}; \cite{Kal02}).

In practice, the censoring mechanism may be dependent on the underlying survival time or on covariates related to prognosis, in which case the censoring time carries information about the event time and the usual independent censoring assumption is violated (\cite{Gil97};  \cite{Rob94}). Such informative or dependent censoring is common in clinical studies with dropout related to deterioration of health status or treatment side effects, and in reliability settings where preventive maintenance or early withdrawal of units depends on their latent failure risk (\cite{Tsi06}; \cite{Ibr13}). When dependence between failure and censoring times is ignored, standard estimators such as the Kaplan–Meier or Cox model–based estimators can be biased, classical tests may lose validity, and identification of the target survival distribution becomes challenging (see \cite{Tsa85}). To address these issues, several authors have proposed models and estimation strategies for dependent censoring, including copula-based formulations for joint event–censoring distributions, frailty and shared random-effect models, joint models for longitudinal and survival data, and inverse-probability-of-censoring weighting approaches that aim to recover unbiased inference under covariate-dependent censoring (\cite{Rob94}; \cite{Laa03}).

To address the challenges posed by dependent censoring, flexible parametric families of lifetime distributions have been developed that can jointly model both event and censoring times while accommodating complex hazard behaviors. Extended Generalized Marshall–Olkin (EGMO) models form a flexible parametric family of lifetime distributions designed to accommodate a wide variety of hazard rate shapes, including increasing, decreasing, bathtub, and unimodal patterns, making them suitable for modeling complex failure time behaviors (\cite{Mar05}; \cite{Dia16}). These models extend the original Marshall–Olkin family by incorporating additional shape parameters through exponentiation and generalization techniques, which enhance their tail behavior and provide greater flexibility compared to classical exponential or Weibull distributions.

To exploit the flexibility of EGMO models in the presence of dependent censoring, we introduce a geometric representation that provides an elegant and unified framework for joint modeling. This representation induces a natural parametrization of the dependence mechanism under a unit constraint, arising from the monotonicity of the survival function (\cite{pmlr-v124-holbrook20a}; \cite{Fradi2024}). As a result, it enables intuitive and interpretable visualizations of the relationship between censoring and failure times through geodesic paths on the manifold (\cite{Amari-2009}; \cite{Mar00}). Although geometric representations have been previously explored in directional statistics and in copula-based constructions for multivariate survival data, their use for explicitly modeling dependent censoring within the EGMO family via a geometric framework appears to be novel.

Despite the growing interest in constrained representations for capturing dependencies in survival analysis, a significant gap remains in their integration with flexible parametric lifetime families under dependent censoring. Existing approaches typically rely on restrictive dependence structures, such as copulas or frailty models, and often lack sufficient flexibility to accommodate complex hazard shapes, and may encounter substantial inferential challenges due to identifiability issues. Moreover, they do not provide a unified geometric interpretation of the joint behavior of survival and censoring times (see \cite{Tsia75}). These limitations motivate the development of our framework, which bridges this gap by combining flexibility, interpretability, and a coherent geometric structure.

Recently, \cite{Ingrid-2024} proposed an intuitive framework for estimating the survival function through density modeling based on a squared ($L^2$)-normalized Laguerre polynomial decomposition, ensuring positivity and normalization of the resulting density. Building upon this idea, we incorporate geometric tools from information geometry to perform maximum likelihood estimation on the underlying statistical manifold rather than in a purely Euclidean parameter space. This geometric formulation naturally handles the structural constraints of the model and leads to a constrained optimization procedure adapted to the intrinsic geometry of the parameter space.

In this paper, we further derive an explicit expression of the truncated log-likelihood of the proposed model in terms of the coefficients arising from the associated decomposition. This formulation naturally leads to a constrained optimization problem on the corresponding parameter space, which we address through a geometric iterative maximum likelihood procedure. In particular, the optimization is performed under structural constraints induced by the model geometry, allowing for a stable and efficient estimation strategy.

We establish the main theoretical properties of the proposed estimator, together with the asymptotic behavior of the associated inferential procedures (see \cite{Vaart-1998}). In particular, we prove consistency and asymptotic normality under standard regularity conditions. The finite-sample performance of the estimators is investigated through extensive simulations, highlighting their robustness in small and moderate sample regimes. Finally, we demonstrate the practical relevance of the proposed methodology through applications to real-world medical data and UEFA Champions League datasets, both exhibiting complex and informative censoring mechanisms.

The remainder of the paper is organized as follows.  Section \ref{sec:model} provides the model description and notations for the EGMO-based dependent censoring framework within a geometric representation. Section \ref{sec:dep} establishes the dependent properties of the proposed model. Section \ref{sec:like} presents the likelihood decomposition. Section \ref{sec:est} develops estimation procedures and asymptotic properties. Section \ref{sec:sim} reports experimental results, including a simulation study and two real data examples.
 Section \ref{sec:con} concludes with a discussion of extensions and future work, while Section  \ref{sec:appendix} includes the proofs of the theoretical results.
%Survival analysis frequently relies on the assumption that censoring is independent, a condition often violated in practice, leading to biased parameter estimates and misleading inferences. For instance, in clinical trials, patients may withdraw from a study due to deteriorating health or adverse treatment effects, making their subsequent survival time unknown and the censoring mechanism clearly dependent on the outcome of interest. This dependency leads to a fundamental identifiability issue, first formalized by \cite{Tsia75}, who showed that dependent censoring renders marginal survival distributions nonparametrically nonidentifiable based solely on observed data. This type of dependent censoring can be naturally formulated  within the context of competing risks, where failure and censoring are viewed as competing events that terminate follow-up. Under this perspective, the Extended Generalized Marshall–Olkin (EGMO) models provide a flexible framework for characterizing the dependence between these events through latent shock variables, enabling more accurate estimation of survival function. 
\section{Model description and notations}\label{sec:model}
The EGMO model extends the Generalized Marshall–Olkin (GMO) model by capturing dependence between the individual shock arrival times of two components while maintaining the independence of the common shock's arrival time.  Formally, this means that we consider three  random variables $X_1,X_2$ and $X_3$,  by assuming dependence between 
$X_1$ and $X_2$, while maintaining the independence of $X_3$ with respect to them. The lifetimes of interest are described by
\begin{eqnarray}
\label{egmomodel}
T=\min(X_1,X_3)\quad\text{and}\quad C=\min(X_2,X_3).
\end{eqnarray}
Based on a right-censoring approach, only the smallest default time is observed. Formally, the outcomes are limited to the couple $(Y,\delta)$, defined by 
\begin{equation}\label{obs}
Y=\min(T,C)\quad\text{and}\quad\delta=\mathds{1}_{\{T\leq C\}}.
\end{equation}
For the EGMO model presented in Equation (\ref{egmomodel}), we consider the following  notations. The density function  of each $X_i$ for  $i=1, 2, 3$
is denoted by $f_i$, while its distribution function  is represented as 
$F_i$. The survival function of $X_i$, which given by $1-F_i$, is denoted by 
$\overline{F}_i$. In addition, the cumulative hazard function of $X_i$ is denoted by  $\Lambda_i$. 
For the joint behavior of 
$X_1$ and $X_2$, the joint density function is represented as  $f_{1,2}$, the joint distribution function as 
$F_{1,2}$, and the joint survival function as $\overline{F}_{1,2}$. Let $f_T$ and $f_C$ denote the  density functions of $T$ and $C$, respectively, while $F_T$ and $F_C$ represent their corresponding distribution functions. 
The distribution function of the observed random time $Y$ is denoted by $H$,  and its density function by $f_Y$.  Note that the survival function of $Y$ is given by
\begin{eqnarray}\label{survieY}
\overline{H}(t)&=& \mathbb{P}(Y>t)=\overline{F}_{1,2}(t,t) \overline{F}_3(t). 
\end{eqnarray}
Recall that the cumulative hazard function associated with any distribution function $F$ is given by
\begin{eqnarray*}
\Lambda(t)=\int_0^t\dfrac{F(du)}{\overline{F}(u^-)},\quad t\geq 0,
\end{eqnarray*}
where $\overline{F}:=1-F$ and $\overline{F}(u^-)=\lim_{s\downarrow u}\overline{F}(s)$. In contrast, the relationship between a distribution function $F$ and its cumulative hazard function $\Lambda$ can be expressed as
\begin{eqnarray*}
\label{eq::lambda}
\overline{F}(t)=:\exp\left(-\widetilde{\Lambda}(t)\right)=\exp\left(-\Lambda^c(t)+\sum_{s\leq t}\log(1-\Delta\Lambda(s))\right),    
\end{eqnarray*}
where $\Lambda^c$ and $\Delta\Lambda$ respectively denote the continuous and discontinuous parts of $\Lambda$. %\citep[see p.898 in][]{Sho86}. 
Therefore, the survival functions of $T$ and $C$ are expressed as: \begin{eqnarray*}\label{marginal}
&&\overline{F}_T(t)=:\exp(-\widetilde{\Lambda}_T(t))\quad\text{and}\quad\overline{F}_C(t)=:\exp(-\widetilde{\Lambda}_C(t)),
\end{eqnarray*}
where $\Lambda_T$ and $\Lambda_C$ define the cumulative hazard functions of $T$ and $C$ respectively.
\section{Dependent properties}\label{sec:dep}
The survival distributions of $T$ and $C$ are given by 
$$\overline{F}_T(t)=\overline{F}_1(t) \overline{F}_3(t) \quad \text{and} \quad \overline{F}_C(t)=\overline{F}_2(t) \overline{F}_3(t).$$
Based on \cite{Hel25}, the survival joint distribution  of $(T,C)$  is given by

\begin{eqnarray*}
\widetilde{P}(t,s)
=\mathbb{P}(T>t,C>s)&=&\mathbb{P}(X_1>t, X_2>s) \mathbb{P}(X_3> \max(t,s))
\\&=&
\min \left(\overline{F}_C(s)\overline{F}_T(t)^{1-\alpha(t)}, \overline{F}_T(t) \overline{F}_C(s)^{1-\beta(s)} \right) 
\Omega \left(\overline{F}_{1}(t),\overline{F}_{2}(s) \right),
\end{eqnarray*}
where $\displaystyle{\alpha=\widetilde{\Lambda}_{3}/(\widetilde{\Lambda}_1+\widetilde{\Lambda}_{3})}$, $\displaystyle{\beta=\widetilde{\Lambda}_{3}/(\widetilde{\Lambda}_2+\widetilde{\Lambda}_{3})}$ 
 and $\displaystyle{\Omega \left(\overline{F}_{1}(t),\overline{F}_{2}(s) \right)=\overline{F}_{1,2}(t,s)/(\overline{F}_1(t)\overline{F}_2(s))}$ is the Sibuya dependence function of $(X_1,X_2)$. 
In particular, the survival copula is given by
$$\widetilde{C}(u,v)=uv\min\left(u^{-\alpha\left(\overline{F}^{-1}_T(u) \right)},v^{-\beta\left(\overline{F}^{-1}_C(v) \right)}\right) \Omega \left(\overline{F}_{1}\left(\overline{F}^{-1}_T(u) \right),\overline{F}_{2}\left(\overline{F}^{-1}_C(v) \right) \right),$$
for $(u,v)\in \overline{F}_T(\mathbb{R}_+)\times \overline{F}_C(\mathbb{R}_+)$, where $\overline{F}^{-1}$ stands for the generalized inverse function of $\overline{F}$.\\
\ \\
The sub-densities $f_{Y,\delta=0}$ and $f_{Y,\delta=1}$ of the vector $(Y,\delta)$ are given by
\begin{eqnarray*}
   f_{Y,\delta=0}(t)  = \overline{F}_3(t) \int_t^{+\infty} f_{1,2}(u,t)du,
\end{eqnarray*}
and 
\begin{eqnarray*}
 f_{Y,\delta=1}(t)&=&f_Y(t)-f_{Y,\delta=0}(t)
 \\&=& \overline{F}_3(t) \int_t^{+\infty}f_{1,2}(t,y)dy +f_3(t) \overline{F}_{1,2}(t,t).
\end{eqnarray*}
To address the issue of non-parametric likelihood identifiability, we study the parametric truncated log-likelihood identifiability  based on a geometric representation with positive coefficients, as defined in Equation (\ref{estloglikelihood}). Following \cite{Hel25}, the use of the  polynomial expansions transforms the infinite-dimensional problem into a finite-dimensional one, where the parameters lie in a simplex and preserve key probabilistic properties. This approach restores identifiability asymptotically, as the polynomial degrees increase, and provides a tractable framework for maximum likelihood estimation. 
\section{Likelihood decomposition}
\label{sec:like}
In this section, we derive decompositions for both the marginal survival function and the joint density function, which in turn yield the truncated log-likelihood.
\subsection{Marginal survival function decomposition}
We consider probability distributions defined over $I=(a,b) \subseteq \mathbb{R}$ according to the Lebesgue measure.
A survival function (SF) denoted $\overline{F}$ associated to a random variable $X$ is a decreasing and differentiable function defined from $I$ into $[0,1]$. The space of survival functions,  defined on $I$, satisfies  
\begin{equation*}
\Gamma	= \Big\{\overline{F}:I \to [0,1] \hspace{0.1cm} \big|\hspace{0.1cm} \dot{\overline{F}} \leq 0,  \; \text{and}  \hspace{0.1cm} \big(\overline{F}(a),  \overline{F}(b)\big)=(1,0) \Big\}. 
\end{equation*}
Since $\Gamma$ does not admit a natural structure due to the inequality constraint, there is no well-defined metric on it. 
In connection with $\Gamma,$
the space of square-root density functions (SRDFs) is a Riemannian representation, satisfying
\begin{eqnarray}
	\label{eq:Hilbert_sphere}
	\mathcal{Q}=\Big\{q : I  \to \mathbb{R} \hspace{0.1cm} \big| \hspace{0.1cm} q \equiv \sqrt{-\dot{\overline{F}}}\geq 0, \hspace{0.1cm}   \hspace{0.1cm}||q||^2_{\mathbb{L}^2} = \int_{I} q(t)^2 dt =1 \Big\}, 
\end{eqnarray} 
where the tangent space of $\mathcal{Q}$ at an arbitrary $q$ is 
\begin{eqnarray*}
	\mathcal{T}_{q} \mathcal{Q} =\Big\{ g:I \to \mathbb{R} \hspace{0.1cm} \big | \hspace{0.1cm} \int_{I} q(t) g(t) dt=0  \Big\}.
\end{eqnarray*}
Since $q$ has a unit norm, the set $\mathcal{Q}$ forms a sub-manifold with the $\mathbb{L}^2$ metric. 
Given two tangent vectors $g_1,g_2 \in \mathcal{T}_q \mathcal{Q} $, 
the Fisher-Rao metric becomes the $\mathbb{L}^2$  metric satisfying
\begin{eqnarray*}
\big<g_1,g_2\big>_{\mathbb{L}^2}=\int_{I} g_1(t) g_2(t) dt.
\end{eqnarray*}
Note that any SRDF $q \in \mathcal{Q}$ can be (isometrically) represented by a unique SF $\overline{F} \in 	\Gamma$ expressed as
\begin{eqnarray}
	\label{eq:SF}
	\overline{F}(x)=  \int_{x}^{b} q(t)^2 dt , \hspace{0.1cm} x \in I.
\end{eqnarray}
In order to reduce the complexity of the group of survival functions due to its hard structure, we make use of  the convergent orthogonal series
expansion
based on writing the corresponding SRDF $q\in  \mathcal{Q}$ as a  linear combination of orthogonal basis $(\phi_l)_l$ in $\mathbb{L}^2(I,\rho)$ with a positive weight function $\rho(t)$
\begin{eqnarray}
	q(t)= \sum_{l=1}^{\infty}  w_l \phi_l(t),
	\label{eq:K-L}
\end{eqnarray}
We consider a truncated version of $q$ at an arbitrary order $m$, expressed as 
\begin{eqnarray}
	q_W^d(t) = \sum_{l=1}^{d} w_l \phi_l(t),
	\label{eq:truncated_K-L}
\end{eqnarray}
with the notation $W=(w_1,\dots,w_d)^T$.
Consequently, this expansion makes it more easy to check that $||q_W^d||_{\mathbb{L}^2}=1$. It translates directly to a unit constraint on the coefficients $w_l$, $l=1,\dots,d$. 
\begin{lem}\label{lemI}
The truncated version of $q$ denoted $q_W^d$ belongs to $\mathcal{Q}$ if and only if $\sum_{l=1}^{d} w_l^2 =1$. \\
\qed
\end{lem}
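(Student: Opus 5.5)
The plan is to compute the $\mathbb{L}^2$ norm of $q_W^d$ directly and use orthonormality of the basis $(\phi_l)_l$. Membership in $\mathcal{Q}$ has two parts: the unit-norm condition $\|q\|^2_{\mathbb{L}^2}=1$, and $q=\sqrt{-\dot{\overline{F}}}\geq 0$ for the survival function $\overline{F}$ built from $q$ via (\ref{eq:SF}). Since (\ref{eq:SF}) uses $q(t)^2$, it always produces a nonincreasing $\overline{F}$ with $\overline{F}(b)=0$ and $-\dot{\overline{F}}=q^2$. The condition $\overline{F}(a)=1$ is exactly the normalization $\int_I q^2=1$. So the content of the lemma is the equivalence $\|q_W^d\|^2_{\mathbb{L}^2}=1 \iff \sum_{l=1}^d w_l^2=1$. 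The sign of $q$ is handled in the usual way, by identifying $q$ with the square-root representative $|q|$, which has the same square and norm.

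To compute the norm, I expand the square and integrate term by term, which is legitimate because the sum is finite:
\begin{equation*}
\|q_W^d\|^2_{\mathbb{L}^2}=\sum_{l=1}^d\sum_{k=1}^d w_l w_k \langle \phi_l,\phi_k\rangle = \sum_{l=1}^d w_l^2 .
\end{equation*}
The last equality holds provided $\langle\phi_l,\phi_k\rangle=\delta_{lk}$. Both directions of the equivalence then follow at once.

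The main obstacle is the mismatch between the inner products. The basis is orthogonal in $\mathbb{L}^2(I,\rho)$, whereas $\mathcal{Q}$ is normalized in the unweighted $\mathbb{L}^2(I)$. I would resolve this by taking the $\phi_l$ orthonormal with respect to the measure defining $\mathcal{Q}$. In the Laguerre setting of \cite{Ingrid-2024}, this means using normalized Laguerre functions $\phi_l(t)=e^{-t/2}L_{l-1}(t)$, which absorb $\sqrt{\rho}$ with $\rho(t)=e^{-t}$ and are orthonormal in $\mathbb{L}^2(I)$. Equivalently, one can read the norm in (\ref{eq:Hilbert_sphere}) with respect to $\rho(t)\,dt$. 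I would state this normalization convention explicitly at the start of the proof, since the identity $\|q_W^d\|^2=\|W\|^2$ is false for a merely orthogonal, non-normalized basis.
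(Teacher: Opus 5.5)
Your proposal is correct and is essentially the paper's argument. The paper also writes $q_W^d = W^\top\Phi$ and uses orthonormality to get $\|q_W^d\|^2_{\mathbb{L}^2} = W^\top I_d W = \sum_{l=1}^d w_l^2$, from which both directions follow. Your explicit handling of the weight $\rho$ and of the sign condition $q \geq 0$ is more careful than the paper, which silently replaces $dt$ by $\rho(t)\,dt$ in the norm and never addresses nonnegativity.
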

\noindent
 Following (\ref{eq:SF}), the truncated version of $\overline{F}$ satisfies
\begin{eqnarray}\label{appF3}
	\overline{F}_W^d (x) 
	= 
	W^T \Big( \int_{x}^{b} \Phi(t) \Phi(t)^T \rho(t) dt  \Big) W,  
\end{eqnarray}
where
 $
\Phi(t)=(\phi_1(t), \dots, \phi_d(t))^T$ $\forall t \in I$.
\subsection{Joint density function decomposition}
In this section, we focus on probability distributions defined over a bivariate interval $\mathcal{D}=(a,b)^2 \subseteq \mathbb{R}^2$ with respect to the Lebesgue measure. A probability density function (PDF) $f$, associated with a random vector $(X, Y)$ defined on $\mathcal{D}$, belongs to the space
\begin{eqnarray*}
	\mathcal{P} = \Big\{f : \mathcal{D} \to \mathbb{R} \ \big| \ f \in \mathbb{L}^1(\mathcal{D},\mathbb{R}), \ f \geq 0, \ \text{and} \ ||f||_{\mathbb{L}^1} = \int_{\mathcal{D}} f(t_1,t_2) dt_1 dt_2 = 1 \Big\}.
\end{eqnarray*}
We work within the space $\mathcal{P}$ without assuming a parametric model, instead endowing $\mathcal{P}$ with the structure of an infinite-dimensional (formal) Riemannian manifold. First, we treat it as a smooth manifold. For a given $f \in \mathcal{P}$, the tangent space can be expressed as
\begin{eqnarray*}
	\mathcal{T}_f \mathcal{P} = \Big\{ h \in C^{\infty}(\mathcal{D}) \ \big| \ \int_{\mathcal{D}} h(t_1,t_2) dt_1 dt_2 = 0 \Big\}.
\end{eqnarray*}
Once the smooth manifold and its associated tangent space are established, we define a Riemannian metric, i.e., a smoothly varying, symmetric, non-degenerate, bilinear function $\mathcal{G}(., .)_f : \mathcal{T}_f \mathcal{P} \times \mathcal{T}_f \mathcal{P} \to \mathbb{R}^+$. Riemannian metrics are essential for defining a notion of distance on a manifold that is independent of any embedding in Euclidean space.
For a given $\mathcal{D}$, the nonparametric Fisher metric on $\mathcal{P}(\mathcal{D})$ is defined as
\begin{eqnarray*}
	\mathcal{G}(h_1, h_2)_f = \int_{\mathcal{D}} \frac{h_1(t_1,t_2) h_2(t_1,t_2)}{f(t_1,t_2)} dt_1 dt_2.
\end{eqnarray*}
The manifold $\mathcal{P}$, equipped with the Fisher metric, can be challenging to work with when calculating geometric quantities of interest (e.g., geodesics or distances). To simplify these computations, we shift focus to the space of square-root density functions
\begin{eqnarray*}
	\mathcal{H} = \Big\{\psi : \mathcal{D} \to \mathbb{R} \ \big| \ \psi \equiv \sqrt{f} \geq 0, \ \text{and} \ ||\psi||_{\mathbb{L}^2}^2 = \int_{\mathcal{D}} \psi(t_1,t_2)^2 dt_1 dt_2 = 1 \Big\}.
\end{eqnarray*}
This space, which is linked to $\mathcal{P}$ via a simple transformation described below, provides a more tractable framework for calculations. By differentiating the unit condition on $\mathcal{H}$, the tangent space of $\mathcal{H}$ at any $\psi$ is given by
\begin{eqnarray*}
	T_{\psi} \mathcal{H} = \Big\{ g : \mathcal{D} \to \mathbb{R} \ \big| \ \big<\psi, g\big>_{\mathbb{L}^2} = \int_{\mathcal{D}} \psi(t_1,t_2) g(t_1,t_2) dt_1 dt_2 = 0 \Big\}.
\end{eqnarray*}
\begin{lem}
\label{lemII}
The map $L : \big(\mathcal{P}, \mathcal{G}(., .)_f\big) \to \big(\mathcal{H}, \big<.,.\big>_{\mathbb{L}^2}\big)$ defined by $L(f) = 2\sqrt{f}$ is a Riemannian isometry. \\ \qed
\end{lem}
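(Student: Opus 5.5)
The plan is to verify the two defining properties of a Riemannian isometry directly. First, $L$ must be a diffeomorphism onto its image. Second, its differential must carry the Fisher metric $\mathcal{G}(\cdot,\cdot)_f$ on $\mathcal{T}_f\mathcal{P}$ to the $\mathbb{L}^2$ inner product on the tangent space of the image. Everything reduces to a pointwise computation of the derivative of $f\mapsto 2\sqrt{f}$. I will work on the set where $f>0$, which is where the Fisher metric is defined.

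\textbf{Step 1 (well-definedness and bijectivity).} For $f\in\mathcal{P}$, the function $\psi=2\sqrt{f}$ is nonnegative and satisfies $\int_{\mathcal{D}}\psi^2=4\int_{\mathcal{D}}f=4$. It is invertible, with inverse $\psi\mapsto\psi^2/4$. Strictly, the image is therefore the sphere of radius $2$ in $\mathbb{L}^2(\mathcal{D})$, that is, $2\mathcal{H}$, rather than the unit sphere $\mathcal{H}$ itself. I will state the result for this rescaled sphere and then note the two standard readings of the lemma. The dilation $\psi\mapsto\psi/2$ identifies $2\mathcal{H}$ with $\mathcal{H}$, but it scales the metric by $1/4$. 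Equivalently, $f\mapsto\sqrt{f}$ is an isometry up to the constant factor $1/4$. This is the usual convention in the square-root (Bhattacharyya) representation.

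\textbf{Step 2 (the differential).} Take a smooth curve $f_\varepsilon$ in $\mathcal{P}$ with $f_0=f$ and $\partial_\varepsilon f_\varepsilon|_{\varepsilon=0}=h\in\mathcal{T}_f\mathcal{P}$. Differentiating pointwise gives
\begin{equation*}
dL_f(h)=\frac{d}{d\varepsilon}\,2\sqrt{f_\varepsilon}\Big|_{\varepsilon=0}=\frac{h}{\sqrt{f}}.
\end{equation*}
This lies in the tangent space at $\psi=2\sqrt f$, because $\int_{\mathcal{D}}\psi\, h/\sqrt f=2\int_{\mathcal{D}}h=0$. Conversely, any tangent vector $g$ at $\psi$ is hit by $h=\sqrt f\, g$, so $dL_f$ is a linear bijection.

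\textbf{Step 3 (metric preservation).} For $h_1,h_2\in\mathcal{T}_f\mathcal{P}$,
\begin{equation*}
\big<dL_f(h_1),dL_f(h_2)\big>_{\mathbb{L}^2}=\int_{\mathcal{D}}\frac{h_1h_2}{f}\,dt_1dt_2=\mathcal{G}(h_1,h_2)_f,
\end{equation*}
which is exactly the required identity.

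\textbf{Main obstacle.} The delicate step is the analytic justification rather than the algebra. The pointwise derivative must be interchanged with the $\mathbb{L}^2$ structure, and one must handle points where $f$ vanishes, where $h/\sqrt f$ may fail to be square-integrable. I would handle this by restricting $\mathcal{P}$ to strictly positive densities, or by requiring tangent vectors $h$ with $h/\sqrt f\in\mathbb{L}^2$. Both restrictions are implicit in the formal infinite-dimensional manifold structure adopted in the paper. Under either restriction, dominated convergence justifies differentiating under the integral.
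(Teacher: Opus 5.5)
Your argument is correct and is essentially the paper's: Appendix B likewise computes the pushforward of $L(f)=2\sqrt{f}$ as $h\mapsto h/\sqrt{f}$ and then checks $\int_{\mathcal{D}} h_1h_2/f\,dt_1dt_2=\mathcal{G}(h_1,h_2)_f$. Your extra points are correct refinements that the paper's proof silently skips: $\|2\sqrt{f}\|_{\mathbb{L}^2}=2$, so $L$ actually lands on the radius-$2$ sphere rather than on $\mathcal{H}$ (equivalently, $\sqrt{f}$ is an isometry into $\mathcal{H}$ only up to the factor $1/4$), and $dL_f$ is a bijection onto the tangent space.
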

\noindent
Let $(\xi_{l_1,l_2})_{l_1,l_2}$ be a set of orthonormal basis functions in $\mathbb{L}^2(\mathcal{D}, \rho)$. The convergent orthogonal series expansion of $\psi$ is expressed as
\begin{eqnarray*}
	\psi(t_1,t_2) = \sum_{l_1,l_2=1}^{\infty} v_{l_1,l_2} \xi_{l_1,l_2}(t_1,t_2),
\end{eqnarray*}
where $v_{l_1,l_2}$ are coefficients given by $v_{l_1,l_2} = \big<\psi, \xi_{l_1,l_2}\big>_{\mathbb{L}^2}$. A truncated version of $\psi$ at order $(m,p)$ is written as
\begin{eqnarray*}
	\psi_{V}^{m,p}(t_1,t_2) = \sum_{l_1,l_2=1}^{m,p} v_{l_1,l_2} \xi_{l_1,l_2}(t_1,t_2),
\end{eqnarray*}
with $V = \big(v_{1,1}, v_{1,2}, \dots, v_{m,p}\big)^{T} \in \mathbb{R}^{mp}$. The remainder of the series $\sum_{l_1,l_2=m+1,p+1}^\infty$ represents the approximation error.

\begin{lem}\label{lrmIII}
	The truncated version $\psi_{V}^{m,p}$ belongs to $\mathcal{H}$ if and only if $\sum_{l_1,l_2=1}^{m,p} v_{l_1,l_2}^2 = 1$. \\
    \qed
\end{lem}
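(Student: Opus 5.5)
The argument is a Parseval computation for the truncated expansion. I will use the orthonormality of $(\xi_{l_1,l_2})_{l_1,l_2}$ in $\mathbb{L}^2(\mathcal{D},\rho)$, namely $\int_{\mathcal{D}} \xi_{l_1,l_2}\xi_{k_1,k_2}\rho = \mathds{1}_{\{(l_1,l_2)=(k_1,k_2)\}}$. The norm in $\mathcal{H}$ has to be read in the same weighted sense, $\|\psi\|^2=\int_{\mathcal{D}}\psi^2\rho$. Equivalently, one can absorb $\sqrt{\rho}$ into the basis functions so that they are orthonormal for Lebesgue measure. This is the convention already used implicitly in Lemma \ref{lemI} and in Equation (\ref{appF3}), where the factor $\rho$ appears inside the integral.

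\textbf{Step 1.} Expand the square of the finite sum and integrate term by term, which is legitimate since only $mp$ terms are involved:
\begin{eqnarray*}
\big\|\psi_V^{m,p}\big\|^2_{\mathbb{L}^2}=\sum_{l_1,l_2=1}^{m,p}\sum_{k_1,k_2=1}^{m,p} v_{l_1,l_2}v_{k_1,k_2}\int_{\mathcal{D}}\xi_{l_1,l_2}\xi_{k_1,k_2}\rho
=\sum_{l_1,l_2=1}^{m,p} v_{l_1,l_2}^2 = V^TV .
\end{eqnarray*}

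\textbf{Step 2.} Deduce both directions from this identity. If $\psi_V^{m,p}\in\mathcal{H}$, then its norm is $1$, so $\sum v_{l_1,l_2}^2=1$. Conversely, if $\sum v_{l_1,l_2}^2=1$, then $\psi_V^{m,p}$ has unit norm, i.e., it lies on the unit Hilbert sphere, and $f=(\psi_V^{m,p})^2$ integrates to one.

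\textbf{Main obstacle.} The difficulty is the nonnegativity requirement $\psi\ge 0$ in the definition of $\mathcal{H}$. A unit-norm coefficient vector does not force the truncated sum to be pointwise nonnegative, so the converse direction is not literally true as $\mathcal{H}$ is written. I will handle this the way the paper does through the square-root representation, by identifying $\psi$ with the density $f=\psi^2$. Only $\psi^2$ matters for the density, and replacing $\psi_V^{m,p}$ by $|\psi_V^{m,p}|$ leaves the norm unchanged. So the unit-norm condition is exactly the condition for $\psi_V^{m,p}$ to represent a valid element of $\mathcal{H}$, equivalently a density in $\mathcal{P}$ via Lemma \ref{lemII}.

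Stated precisely, the lemma then reads: the truncated element lies on the unit sphere of $\mathbb{L}^2$ (the ambient manifold in which $\mathcal{H}$ sits) if and only if $V$ lies on the unit sphere $\mathbb{S}^{mp-1}$. This is the same identification used in Lemma \ref{lemI}.
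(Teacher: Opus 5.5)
Your proof is essentially the paper's: expand the finite double sum, use orthonormality of the $\xi_{l_1,l_2}$ to get $\|\psi_V^{m,p}\|_{\mathbb{L}^2}^2=\sum_{l_1,l_2} v_{l_1,l_2}^2$, and read off both directions. Your two caveats are correct and go beyond the paper, whose proof silently ignores the weight $\rho$ and drops the constraint $\psi\geq 0$; taken literally, its converse fails whenever the truncated sum changes sign, so the version you actually prove (unit norm in the matching weighted $\mathbb{L}^2$ sense if and only if $V\in\mathcal{S}^{mp-1}$) is the right reading of the lemma.
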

\noindent The truncated version of $f$ is then given by
\begin{eqnarray}\label{appf12}
	f_V^{m,p}(t_1,t_2) = \psi_{V}^{m,p}(t_1,t_2)^2 = V^T \boldsymbol{\xi}(t_1,t_2) \boldsymbol{\xi}(t_1,t_2)^T V,
\end{eqnarray}
where $\boldsymbol{\xi}(t_1,t_2) = \big(\xi_{1,1}(t_1,t_2), \xi_{1,2}(t_1,t_2), \dots, \xi_{m,p}(t_1,t_2)\big)^T$, for all $(t_1,t_2) \in \mathcal{D}$.
\subsection{Truncated log-likelihood}
For a sample $S={(Y_i, \delta_i), i=1,\ldots,n}$, the log-likelihood is given by
\begin{eqnarray}\label{likelihood}
\mathcal{L}(f_{1,2},f_3;S) \nonumber&=& \sum_{i=1}^n \delta_i \log\{f_{Y, \delta=1}(y_i)\}+ (1-\delta_i) \log\{f_{Y, \delta=0}(y_i)\}
\\&=& \sum_{i=1}^n \delta_i \log\{f_{Y}(y_i)-f_{Y, \delta=0}(y_i)\}+ (1-\delta_i) \log\{f_{Y, \delta=0}(y_i)\},
\end{eqnarray}
where 
\begin{eqnarray}\label{lambdafydelta=0}
f_{Y,\delta=0}(t)  &=& \overline{F}_3(t) \int_t^{+\infty} f_{1,2}(u,t)du.
\end{eqnarray}
The direct maximization of this log-likelihood can be challenging, since it depends on an unknown $\overline{F}_3$ and $f_{1,2}$. Therefore, we must estimate both of these distributions in order to estimate the survival functions  $\overline{F}_T$, $\overline{F}_C$ and the joint survival function $\widetilde{P}$.\\
\ \\
In our context, and without loss of generality, we assume that the random times $X_i$ for $i=1,2,3$ have equal supports $S_{X_i}=I=(a,b)$.
If $X_3$ has continuous survival function $\overline{F}_3$, based on Equation (\ref{appF3}), the truncated version of $\overline{F}_3$ satisfies
\begin{eqnarray}\label{estF3}
	\overline{F}_{3,W}^d (x) 
	= 
	W^T \Big( \int_{x}^{b} \Phi(t) \Phi(t)^T \rho(t) dt  \Big) W,  
\end{eqnarray}
where  
$\Phi(t)=(\phi_1(t), \dots, \phi_d(t))^T$ $\forall t \in I$, is a  linear combination of orthogonal basis in $\mathbb{L}^2(I,\rho)$ with fixed coefficients $W=(w_1,\dots,w_d)^T$, such that $\displaystyle{\sum_{l=1}^d w_l^2=1}$. 
Similarly, if $(X_1,X_2)$ has a continuous
density $f_{1,2}$, based on Equation (\ref{appf12}), the truncated version of $f_{1,2}$ satisfies
\begin{eqnarray}\label{estf12}
	f^{m,p}_{1,2,V}(t_1,t_2) = \psi^{m,p}_{V}(t_1,t_2)^2 = \big( \sum_{i=1}^m \sum_{j=1}^p v_{i,j} \xi_{i,j}(t_1,t_2) \big)^2,
\end{eqnarray}
where $\boldsymbol{\xi}(t_1,t_2) = \big(\xi_{1,1}(t_1,t_2), \xi_{1,2}(t_1,t_2), \dots, \xi_{m,p}(t_1,t_2)\big)^T$, for all $(t_1,t_2) \in \mathcal{D}$, set of orthonormal basis functions in $\mathbb{L}^2(\mathcal{D}, \rho)$, with coefficients $V = \big(v_{1,1}, v_{1,2}, \dots, v_{m,p}\big)^{T} \in \mathbb{R}^{mp}$, such that $\sum_{l_1,l_2=1}^{m,p} v_{l_1,l_2}^2 = 1$.  As a result, based on Equation (\ref{estF3}) and (\ref{estf12}), the truncated version of the  sub-density $f_{Y,\delta=0}(t)$, is given by 
\begin{eqnarray}\label{estfydelta=0}
\widetilde{f}_{Y,\delta=0}^{m,p,d}(t)  &=&  \overline{F}_{3,W}^d(t) \int_t^{b} f_{1,2,V}^{m,p}(u,t)du.
\end{eqnarray}
The density $f_Y$ of the observed random time $Y$ can be estimated by the non parametric kernel method, defined by  
\begin{eqnarray}\label{kernel}
\widetilde{f}_{Y,n}(t)&=&\frac{1}{nh}\sum_{i=1}^n K\left( \frac{t-Y_i}{h} \right),    
\end{eqnarray}
where  $K$ is the kernel and $h > 0$ is  the bandwidth. 
Hence, by substituting the functions (\ref{estfydelta=0}) and (\ref{kernel}) into Equation (\ref{likelihood}), we obtain  the truncated log-likelihood, given by
\begin{eqnarray}\label{estloglikelihood}
\widetilde{\mathcal{L}}_{m,p,d}(\theta;S)&=& \frac{1}{n}\sum_{i=1}^n \delta_i \log\{\widetilde{f}_{Y,n}(y_i)-\widetilde{f}_{Y, \delta=0}^{m,p,d}(y_i)\}+ (1-\delta_i) \log\{\widetilde{f}_{Y, \delta=0}^{m,p,d}(y_i)\},
\end{eqnarray}
where $\theta=(W,V) \in (\Theta_1,\Theta_2)$, which form a vector of size $mp+d$. 
\section{Estimation and asymptotic properties }
\label{sec:est}
In this section, we present a geometric optimization procedure for computing the maximum likelihood estimator as well as its asymptotic properties.
\subsection{The maximum likelihood estimation}
\noindent Let $ A(t) $ denote a geodesic path on the unit sphere~\cite{pmlr-v124-holbrook20a} defined as
$$ \mathcal{S}^{M-1} = \Big\{A = (a_1,\dots,a_M)^\top \in \mathbb{R}^M \hspace{0.1cm} \big| \hspace{0.1cm}  \sum_{l=1}^{M} a_l^2 =1 \Big\} \subset \mathbb{R}^M, $$
starting from the point $ A_0 \in \mathcal{S}^{M-1} $ with initial velocity $ \dot{A}_0 \in T_{A_0}\mathcal{S}^{M-1} $, the tangent space of $\mathcal{S}^{M-1}$ at $ A_0 $. Its explicit expression is provided in the Appendix.
Let $ F : \mathcal{S}^{M-1} \to \mathbb{R} $ be a smooth scalar-valued function.
\begin{defi}
	Given an initial condition $ A_0 \in \mathcal{S}^{M-1} $ and a tangent vector $  \dot{A}_0 \in T_{A_0}\mathcal{S}^{M-1} $, the Riemannian Hessian of $ F $ along $  \dot{A}_0 $ is defined as
	\begin{equation*}
		\text{Hess}F ( \dot{A}_0,  \dot{A}_0) = \left. \frac{d^2}{dt^2} F(A(t)) \right|_{t=0},
	\end{equation*}
	where $ A(t) $ is the geodesic satisfying $ A(0) = A_0 $ and $ \dot{A}(0) =  \dot{A}_0 $.
\end{defi}
\begin{lem}
\label{lemIV}
	Under the above assumptions, the Riemannian Hessian of $ F $ evaluated along $  \dot{A}_0 $ is given by
	\begin{equation*}
		\text{Hess}F = F_{AA} - F_A^\top A_0 I_M,
	\end{equation*}
	where $ F_A $ and $ F_{AA} $ denote the Euclidean gradient and Hessian of $ F $, respectively.
	\qed 
\end{lem}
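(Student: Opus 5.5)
We compute the second derivative of $F(A(t))$ along the explicit great-circle geodesic and read off the quadratic form. On the unit sphere, the geodesic with $A(0)=A_0$ and $\dot A(0)=\dot A_0$ is
\begin{equation*}
A(t)=\cos(\|\dot A_0\| t)\,A_0+\sin(\|\dot A_0\| t)\,\frac{\dot A_0}{\|\dot A_0\|},
\end{equation*}
and the case $\dot A_0=0$ is trivial. The only property of this curve we need is its second derivative at $0$. Differentiating twice gives
\begin{equation*}
\ddot A(t)=-\|\dot A_0\|^2 A(t), \qquad \text{so} \qquad \ddot A(0)=-\|\dot A_0\|^2 A_0 .
\end{equation*}
This is the statement that the geodesic acceleration is purely normal to the sphere, with magnitude the squared speed.

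Next we apply the chain rule twice to $t\mapsto F(A(t))$, where $F$ is viewed as a smooth function on a neighbourhood of $\mathcal{S}^{M-1}$ in $\mathbb{R}^M$:
\begin{equation*}
\frac{d^2}{dt^2}F(A(t))=\dot A(t)^\top F_{AA}(A(t))\,\dot A(t)+F_A(A(t))^\top \ddot A(t).
\end{equation*}
Evaluating at $t=0$ and inserting $\ddot A(0)$ gives
\begin{equation*}
\text{Hess}F(\dot A_0,\dot A_0)=\dot A_0^\top F_{AA}\dot A_0-(F_A^\top A_0)\,\|\dot A_0\|^2 .
\end{equation*}
Since $\|\dot A_0\|^2=\dot A_0^\top I_M \dot A_0$, this equals
\begin{equation*}
\dot A_0^\top\big(F_{AA}-F_A^\top A_0\, I_M\big)\dot A_0 .
\end{equation*}
This is exactly the quadratic-form reading of the claimed operator $F_{AA}-F_A^\top A_0 I_M$.

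We must still justify that this quadratic form determines the Hessian. Strictly speaking, the identity holds as a bilinear form restricted to $T_{A_0}\mathcal{S}^{M-1}$. The definition only specifies $\text{Hess}F$ through its values on pairs $(\dot A_0,\dot A_0)$. Polarization then recovers the symmetric bilinear form on the tangent space, since both sides are symmetric. The resulting Riemannian Hessian as a linear map on $T_{A_0}\mathcal{S}^{M-1}$ is $P_{A_0}(F_{AA}-F_A^\top A_0 I_M)P_{A_0}$, where $P_{A_0}=I_M-A_0A_0^\top$ is the orthogonal projector onto the tangent space. This coincides with the displayed formula whenever it is evaluated on tangent vectors. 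The step that needs the most care is this interpretation of the statement, namely that the equality is meant on tangent directions, together with the confirmation that the extension of $F$ off the sphere does not matter. The latter holds because $A(t)$ stays on the sphere, so only the values of $F$ on $\mathcal{S}^{M-1}$ enter the left-hand side. The $A_0$-dependence of the right-hand side then enters only through the normal component $F_A^\top A_0$, as expected.
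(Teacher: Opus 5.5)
Your proof is correct and follows the same route as the paper's. Both use the closed-form great-circle geodesic, the identity $\ddot A(t)=-\|\dot A_0\|^2 A(t)$, and the second-order chain rule evaluated at $t=0$ to read off the quadratic form $\dot A_0^\top\big(F_{AA}-F_A^\top A_0\, I_M\big)\dot A_0$. Your added remarks are a sound clarification the paper leaves implicit: the formula holds as a form on tangent directions, with operator $P_{A_0}\big(F_{AA}-F_A^\top A_0 I_M\big)P_{A_0}$, and it does not depend on how $F$ is extended off the sphere.
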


\noindent Algorithm~\ref{alg:snr} shows the steps for implementing the Newton-Raphson method on the sphere in order to find a minimum of $F$, henceforth called \emph{Spherical Newton-Raphson (SNR)}.
	
\begin{algorithm} 
	\caption{Newton-Raphson method on the sphere.}\label{alg:snr}
	\begin{algorithmic}[1]
		\State Initialize $ A_0 $ and $  \dot{A}_0 $
		\While{not converged}
		\State Compute $ F_A $ and $ F_{AA} $.
		\State Evaluate Hessian: $ \text{Hess}F = F_{AA} - F_A^\top A_0 I_M $.
		\State Update the velocity using the Newton step:
		$$  \dot{A} \leftarrow  \dot{A} - \epsilon \left( I_M - A A^\top \right) \text{Hess}F^{-1} \left( I_M - A A^\top \right) F_A .$$
		\State Update the position using the geodesic path:
		\State $$ A \leftarrow A \cos\left( \epsilon || \dot{A}||_2 \right) + \frac{ \dot{A}}{|| \dot{A}||_2} \sin\left( \epsilon || \dot{A}||_2 \right) .$$
		\EndWhile
	\end{algorithmic}
\end{algorithm}

\noindent Given the set of observations $S={(Y_i, \delta_i), i=1,\ldots,n}$, the log-likelihood distribution evaluated at $\theta=(W,V)$ is given by Equation (\ref{estloglikelihood}),
that we want to maximize. Let
$
A_i = \widetilde{f}_{Y,n}(y_i)$, $
B_i = \widetilde{f}_{Y,\delta=0}^{m,p,d}(y_i) = \overline{F}_{3,W}^d(y_i) \int_{y_i}^{b} f_{1,2,V}^{m,p}(u,y_i)du$ and $
M(x) = \int_{x}^{b} \Phi(t) \Phi(t)^T \rho(t) dt$.
According to these new notations
the log-likelihood becomes
\begin{equation*}
\widetilde{\mathcal{L}}_{m,p,d}(\theta;S) = \frac{1}{n}\sum_{i=1}^n
\Big(
\delta_i \log\{A_i-B_i\} - (1-\delta_i) \log\{B_i\}
\Big).
\end{equation*}
The first partial derivative of the log-likelihood w.r.t. $W$ is given by
\begin{equation*}
	\frac{\partial \widetilde{\mathcal{L}}(\theta;S)}{\partial W}
	=
-	\frac{1}{n} \sum_{i=1}^n 
	\left[
	 \delta_i \times \frac{1}{A_i - B_i} \times \frac{\partial B_i}{\partial W}
	+ (1 - \delta_i) \times \frac{1}{B_i} \times \frac{\partial B_i}{\partial W}
	\right].
\end{equation*}
Or 
\begin{eqnarray*}
	\frac{\partial B_i}{\partial W}
=	\left( \int_{y_i}^{b} f^{m,p}_{1,2,V}(u, y_i) \, du \right)
	\times
	\frac{\partial \overline{F}_{3,W}^d(y_i)}{\partial W} =
	2 \left( \int_{y_i}^{b} f^{m,p}_{1,2,V}(u, y_i) \, du \right)
	\times M(y_i) \times W.
\end{eqnarray*}
So, 
\begin{eqnarray}
	\frac{\partial \widetilde{\mathcal{L}}(\theta;S)}{\partial W}
	= - \frac{1}{n} \sum_{i=1}^n \left[
	\left( \frac{1 - \delta_i}{B_i} + \frac{\delta_i}{A_i - B_i} \right)
	\times 2 \left( \int_{y_i}^b f^{m,p}_{1,2,V}(u, y_i) \, du \right)
	\times M(y_i) \times W
	\right].
	\label{gradW}
\end{eqnarray}
The first partial derivative of the log-likelihood w.r.t. $V$ is given by
\begin{equation*}
	\frac{\partial \widetilde{\mathcal{L}}(\theta;S)}{\partial V}
	= 
	- \frac{1}{n} \sum_{i=1}^n 
	\left[
	 \delta_i \times \frac{1}{A_i - B_i} \times \frac{\partial B_i}{\partial V}
	+ (1 - \delta_i) \times \frac{1}{B_i} \times \frac{\partial B_i}{\partial V}
	\right].
\end{equation*}
Or
\begin{equation*}
	\frac{\partial}{\partial V} \left( \int_{y_i}^{b} f^{m,p}_{1,2,V}(u, y_i) \, du \right)
	=
	2 \int_{y_i}^{b} \psi^{m,p}_V(u, y_i) \times \boldsymbol{\xi}(u, y_i) \, du
\end{equation*}
implying
\begin{equation}\label{partialBV}
	\frac{\partial B_i}{\partial V}
	=
	\overline{F}_{3,W}^d(y_i)
	\times 2 \int_{y_i}^{b} \left( V^T \boldsymbol{\xi}(u, y_i) \right) \times
	\boldsymbol{\xi}(u, y_i) \, du.
\end{equation}
So,
\begin{eqnarray}
	\frac{\partial \widetilde{\mathcal{L}}(\theta;S)}{\partial V}
	= - \frac{1}{n} \sum_{i=1}^n \left[
	\left( \frac{1 - \delta_i}{B_i} + \frac{\delta_i}{A_i - B_i} \right)
	\times 2 \, \overline{F}_{3,W}^d(y_i)
	\times \int_{y_i}^b \left( V^T \boldsymbol{\xi}(u, y_i) \right)
	\boldsymbol{\xi}(u, y_i) \, du
	\right].
	\label{gradV}
\end{eqnarray}
The second partial derivative w.r.t. $W$ is
\begin{eqnarray*}
	\frac{\partial^2 \widetilde{\mathcal{L}}(\theta;S)}{\partial W^2}
	&=&
	- \frac{1}{n} \sum_{i=1}^n \Bigg[
	 \delta_i \times \frac{1}{(A_i - B_i)^2}
	\left( \frac{\partial B_i}{\partial W} \right)
	\left( \frac{\partial B_i}{\partial W} \right)^T
	+ \delta_i \times \frac{1}{A_i - B_i} \times \frac{\partial^2 B_i}{\partial W^2} \nonumber \\
	&& \qquad
	+ (1 - \delta_i) \times \left(
	- \frac{1}{B_i^2}
	\left( \frac{\partial B_i}{\partial W} \right)
	\left( \frac{\partial B_i}{\partial W} \right)^T
	+ \frac{1}{B_i} \times \frac{\partial^2 B_i}{\partial W^2}
	\right)
	\Bigg].
\end{eqnarray*}
The second partial derivative w.r.t. $V$ is
\begin{eqnarray*}
	\frac{\partial^2 \widetilde{\mathcal{L}}(\theta;S)}{\partial V^2}
	&=&
	- \frac{1}{n} \sum_{i=1}^n \Bigg[
	 \delta_i \times \frac{1}{(A_i - B_i)^2}
	\left( \frac{\partial B_i}{\partial V} \right)
	\left( \frac{\partial B_i}{\partial V} \right)^T
	+ \delta_i \times \frac{1}{A_i - B_i} \times \frac{\partial^2 B_i}{\partial V^2} \nonumber \\
	&& \qquad
	+ (1 - \delta_i) \times \left(
	- \frac{1}{B_i^2}
	\left( \frac{\partial B_i}{\partial V} \right)
	\left( \frac{\partial B_i}{\partial V} \right)^T
	+ \frac{1}{B_i} \times \frac{\partial^2 B_i}{\partial V^2}
	\right)
	\Bigg].
\end{eqnarray*}
The cross partial derivative is
\begin{eqnarray*}
	\frac{\partial^2 \widetilde{\mathcal{L}}(\theta;S)}{\partial V \partial W}
	&=&
	- \frac{1}{n} \sum_{i=1}^n \Bigg[
	 \delta_i \times \frac{1}{(A_i - B_i)^2}
	\left( \frac{\partial B_i}{\partial V} \right)
	\left( \frac{\partial B_i}{\partial W} \right)^T
	+ \delta_i \times \frac{1}{A_i - B_i} \times \frac{\partial^2 B_i}{\partial V \partial W} \nonumber \\
	&& \qquad
	+ (1 - \delta_i) \times \left(
	- \frac{1}{B_i^2}
	\left( \frac{\partial B_i}{\partial V} \right)
	\left( \frac{\partial B_i}{\partial W} \right)^T
	+ \frac{1}{B_i} \times \frac{\partial^2 B_i}{\partial V \partial W}
	\right)
	\Bigg],
\end{eqnarray*}
and
\begin{equation*}
	\frac{\partial^2 \widetilde{\mathcal{L}}(\theta;S)}{\partial W \partial V} =
	\left( \frac{\partial^2 \widetilde{\mathcal{L}}(\theta;S)}{\partial V \partial W} \right)^T.
\end{equation*}
In the hessian terms above we need the expression of different second-order derivatives of $B_i$ satisfying
\begin{eqnarray*}
	\frac{\partial^2 B_i}{\partial V^2}
	=
	2\, 	\overline{F}_{3,W}^d(y_i) \times \int_{y_i}^b \boldsymbol{\xi}(u,y_i) \boldsymbol{\xi}(u,y_i)^T \, du,
\end{eqnarray*}

\begin{eqnarray*}	
	\frac{\partial^2 B_i}{\partial W^2}
	=
		2 \left( \int_{y_i}^{b} f^{m,p}_{1,2,V}(u, y_i) \, du \right) \, \times M(y_i),
\end{eqnarray*}
	
\begin{eqnarray*}	
	\frac{\partial^2 B_i}{\partial V \partial W}
	=
	2 \left( \int_{y_i}^b V^T \boldsymbol{\xi}(u, y_i) \boldsymbol{\xi}(u,y_i) \, du \right) \times \left( M(y_i) \times W \right)^T.
\end{eqnarray*}
\noindent According to our case the loss function to be minimized is the negative log-likelihood $ \widetilde{l}(\theta;S) = - \widetilde{\mathcal{L}}_{m,p,d}(\theta;S)$. Algorithm~\ref{alg:snr-p} allows us to find the iterated maximum likelihood estimation. 
\begin{algorithm}
\caption{Maximum likelihood estimation.}
\label{alg:snr-p}
\begin{algorithmic}[1]
\State Initialize $(V_0, W_0, \dot{V}_0, \dot{W}_0)$ with $||V_0||_2 = 1$, $||W_0||_2 = 1$, $\big<V_0,\dot{V}_0 \big>_2=0$ and $\big<W_0,\dot{W}_0 \big>_2=0$.
\While{not converged}
\State Compute gradients: $F_V = \frac{\partial \widetilde{l}(\theta;S)}{\partial V}$, $F_W = \frac{\partial \widetilde{l}(\theta;S)}{\partial W}$.
\State Compute Hessians:
\[F_{VV} = \frac{\partial^2 \widetilde{l}(\theta;S)}{\partial V^2}, \quad F_{WW} = \frac{\partial^2 \widetilde{l}(\theta;S)}{\partial W^2}, \quad F_{VW} = \frac{\partial^2 \widetilde{l}(\theta;S)}{\partial V \partial W}\].
\State Form Hessian block matrix:
\[\text{Hess}F = 
\begin{pmatrix}
			F_{VV} & F_{VW} \\
			F_{VW}^T & F_{WW}
		\end{pmatrix}
		-
		\begin{pmatrix}
			F_V^\top V I_{mp} & 0 \\
			0 & F_W^\top W I_d
		\end{pmatrix}
		\]
		\State Solve Newton system for velocity update:
		\[
		\begin{pmatrix}
			\dot{V} \\
			\dot{W}
		\end{pmatrix}
		\gets
		\begin{pmatrix}
			\dot{V} \\
			\dot{W}
		\end{pmatrix}
		- \epsilon
		 \begin{pmatrix}
			I_{mp} - V V^\top & 0 \\
			0 & I_d - W W^\top
		\end{pmatrix}
		\text{Hess}F^{-1}
		\begin{pmatrix}
			I_{mp} - V V^\top & 0 \\
			0 & I_d - W W^\top
		\end{pmatrix}
		\begin{pmatrix}
			F_V \\
			F_W
		\end{pmatrix}
		\]
		\State Retraction on the spheres (via geodesics):
		\[
		V \leftarrow V \cos(\epsilon ||\dot{V}||_2) + \frac{\dot{V}}{||\dot{V}||_2} \sin(\epsilon ||			\dot{V}||_2)
		\]
		\[
		W \leftarrow W \cos(\epsilon ||\dot{W}||_2) + \frac{\dot{W}}{||\dot{W}||_2} \sin(\epsilon ||\dot{W}||_2)
		\]
		\EndWhile
	\end{algorithmic}
\end{algorithm}
\newpage
%\textcolor{red}{The maximization of the log-likelihood  (\ref{estloglikelihood}) for fixed degrees   $m$, $p$ and $d$ yields the maximum likelihood estimates of the model parameters $W$ and $V$:...................................}
%$$\widetilde{\theta}_n=\left(\widetilde{W}_n,\widetilde{V}_n\right) =\underset{\theta \in \mathbb{R}^{mp+d}}{\argmax} \mathcal{L}(\theta;S).$$
\subsection{Estimation of  the survival functions}
We now consider the  estimation of the survival functions $\overline{F}_T$ and $\overline{F}_C$, as well as the joint survival $P$ of the survival time $T$ and the censoring time $C$. For this, we assume that we have an independent and identically distributed (\textit{i.i.d.}) sample $(Y_i,\delta_i)_{1\leq i \leq n}$ of size 
$n$, drawn from the observed model $(Y,\delta)$.
Based on the estimator $\widetilde{\theta}_n=\left(\widetilde{W}_n,\widetilde{V}_n\right)$,
 we  obtain the estimators of the joint survival function $\overline{F}_{1,2}$ and the joint survival function of $\overline{F}_3$, namely,
\begin{eqnarray*}
 \widetilde{\overline{F}}_{1,2,V}^{m,p,n}(x,y)&=& \int_x^{a} \int_y^{b}  \widetilde{f}_{1,2,V}^{m,p,n}(x,y) dx dy
 \\&=& \int_x^{a} \int_y^{b} \widetilde{V}_n^T \boldsymbol{\xi}(x,y) \boldsymbol{\xi}(x,y)^T \widetilde{V}_n dxdy,     
\end{eqnarray*}
and 
$$\widetilde{\overline{F}}_{3,W}^{d,n}(t)=\widetilde{W}_n^T \Big( \int_{t}^{b} \Phi(x) \Phi(x)^T dx  \Big) \widetilde{W}_n.$$
The estimation of $\overline{F}_1(x)$  and $\overline{F}_2(y)$ are given respectively by
$$ \widetilde{\overline{F}}_{1,V}^{m,p,n}(t)=\widetilde{\overline{F}}_{1,2,V}^{m,p,n}(t,b) \quad \text{and} \quad \widetilde{\overline{F}}_{2,V}^{m,p,n}(t)=\widetilde{\overline{F}}_{1,2,V}^{m,p,n}(a,t). $$
We conclude that the estimation of $\overline{F}_T$ and $\overline{F}_C$ are given by 
$$\widetilde{\overline{F}}_{T}^{m,p,d,n}(t)= \widetilde{\overline{F}}_{1,V}^{m,p,n}(t) \widetilde{\overline{F}}_{3,W}^{d,n}(t) \quad \text{ and } \quad  \widetilde{\overline{F}}_{C}^{m,p,d,n}(t)=  \widetilde{\overline{F}}_{2,V}^{m,p,n}(t) \widetilde{\overline{F}}_{3,W}^{d,n}(t).$$
Finally, we obtain the estimator of the survival joint distribution   $\widetilde{P}$,  given by
$$\widetilde{P}^{m,p,d,n}(t,s)=
\min \left(\widetilde{\overline{F}}_{C}^{m,p,d,n}(s)\widetilde{\overline{F}}_{T}^{m,p,d,n}(t)^{1-\widetilde{\alpha}_n^{m,p,d}(t)}, \widetilde{\overline{F}}_{T}^{m,p,d,n}(t) \widetilde{\overline{F}}_{C}^{m,p,d,n}(s)^{1-\widetilde{\beta}_n^{m,p,d}(s)} \right) 
\frac{\overline{F}_{1,2,V}^{m,p,d,n}(x,y)}{\overline{F}_{1,V}^{m,p,d,n}(x) \overline{F}_{2,V}^{m,p,d,n}(y)},
$$
where $$ \widetilde{\alpha}_n^{m,p,d}(t)=\frac{-\log \left(\widetilde{\overline{F}}_{3,W}^{d,n}(t)\right)}{-\log\left(\widetilde{\overline{F}}_{T}^{m,p,d,n}(t)\right) }\quad \text{and} \quad \widetilde{\beta}_n^{m,p,d}(t)=\frac{-\log \left(\widetilde{\overline{F}}_{3,W}^{d,n}(t)\right)}{-\log\left(\widetilde{\overline{F}}_{C}^{m,p,d,n}(t)\right) }.$$

\subsection{Asymptotic properties}
We start by establishing the asymptotic properties of  $\theta_n$, we rely on the framework provided by 
\cite{Whi82} which develops sufficient conditions for the asymptotic normality and consistency  of the ML estimators. To do so, 
let 
\begin{eqnarray*}
g(t,\ell;f_{1,2},f_3)&=&\left[f_Y(t)-f_{Y,\delta=0}(t) \right]^{\ell} \left[f_{Y,\delta=0}(t) \right]^{1-\ell}, 
\end{eqnarray*}

\begin{eqnarray*}
h(t,\ell; \theta)
&=&  \left[\widetilde{f}_{Y,n}(t)-\widetilde{f}_{Y,\delta=0}^{m,p,d}(t) \right]^{\ell} \left[\widetilde{f}_{Y,\delta=0}^{m,p,d}(t)  \right]^{1-\ell},
\end{eqnarray*}
where $\widetilde{f}_{Y,\delta=0}^{m,p,d}(t) $ is given by Equation (\ref{estfydelta=0}).
The Kullback-Leibler information criterion given by 
$$\displaystyle{\text{KL}(g||h;\theta)=\mathbb{E} \left( \log \left[ \frac{g(Y,\delta;f_{1,2},f_3)}{h(Y,\delta;\theta) } \right]\right)}.$$ 
Consider the following assumptions:
\begin{enumerate}
\item[ ] $\mathbf{(H_1})$:  The functions $t \to \int_t^{+\infty} f_{1,2} (x,t)dx$, $t \to \int_t^{+\infty} f_{1,2} (t,y)dy$ and $t \to f_3(t)$ belongs to $L^2([0,+\infty[)$.  
\item[ ] $\mathbf{(H_2)}$: 
(a) $\theta$ is interior to $\mathbb{R}_+^{m+2}$.
(b) $B(\theta)$ is not singular.
(c) $\theta$ is a regular point of $A(\theta)$. 
\end{enumerate}
These assumptions constitute standard regularity conditions used in maximum likelihood analysis. 
Assumption $\mathbf{(H_1)}$ ensures appropriate integrability properties, which are necessary to control the variability of the likelihood and to justify the use of limit theorems.
Assumption $\mathbf{(H_2)}$ serves as regularity conditions commonly adopted in maximum likelihood analysis to demonstrate asymptotic properties. The following proposition shows the existence of the ML estimator $\theta_n$.

\begin{prop}[Existence]\label{existance}
For all $n \in \mathbb{N}^{\star}$, there exists a measurable ML estimator $\theta_n$.
\end{prop}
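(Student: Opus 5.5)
The plan is to follow the classical existence argument of \cite{Whi82}, Theorem 2.1: a measurable maximizer exists as soon as the objective is measurable in the data for each fixed parameter, continuous in the parameter for each fixed sample, and the parameter set is compact. Here the parameter is $\theta=(W,V)$ with $\|W\|_2=1$ and $\|V\|_2=1$ (Lemma \ref{lemI} and Lemma \ref{lrmIII}). So the natural parameter space is $\Theta=\mathcal{S}^{d-1}\times\mathcal{S}^{mp-1}$, possibly intersected with the closed constraints that make the likelihood well defined. This set is a product of unit spheres, hence compact in $\mathbb{R}^{mp+d}$.

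The first step is measurability. For fixed $\theta$, the map $S\mapsto\widetilde{\mathcal{L}}_{m,p,d}(\theta;S)$ in (\ref{estloglikelihood}) is built from $\delta_i$, the kernel estimator (\ref{kernel}), and $\widetilde{f}_{Y,\delta=0}^{m,p,d}(y_i)$. The kernel estimator is measurable in $(Y_1,\dots,Y_n)$ for a measurable kernel $K$. The sub-density term $\widetilde{f}_{Y,\delta=0}^{m,p,d}(y_i)$ is given by (\ref{estF3})--(\ref{estfydelta=0}) as the quadratic form $W^TM(y_i)W$ times $\int_{y_i}^b (V^T\boldsymbol{\xi}(u,y_i))^2du$. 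Both factors are continuous in $y_i$, by continuity of integrals with a variable endpoint under $\mathbf{(H_1)}$ and square integrability of the basis functions. Hence $\widetilde{\mathcal{L}}_{m,p,d}(\theta;\cdot)$ is a composition of Borel functions.

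The second step is continuity in $\theta$. For fixed $y$, the factor $W\mapsto W^TM(y)W$ is a quadratic polynomial. The factor $V\mapsto V^T\big(\int_y^b\boldsymbol{\xi}\boldsymbol{\xi}^Tdu\big)V$ is also a quadratic polynomial, with a finite matrix because the $\xi_{l_1,l_2}$ are in $\mathbb{L}^2$. So $B_i(\theta)$ is a polynomial in $\theta$, and $\log B_i$ and $\log(A_i-B_i)$ are continuous wherever their arguments are positive.

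The main obstacle is that the logarithms can diverge to $-\infty$ on the boundary sets $\{B_i=0\}$ or $\{A_i=B_i\}$, and $A_i-B_i$ may even be negative. To handle this, I would treat $\widetilde{\mathcal{L}}$ as an extended-valued function into $[-\infty,\infty)$, setting $\log x=-\infty$ for $x\le 0$. This function is upper semicontinuous on the compact set $\Theta$. Upper semicontinuity is enough for the supremum to be attained, and the measurable selection theorem (Jennrich/Schmetterer, as used in \cite{Whi82}) still applies to a jointly Carathéodory-type objective.

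I would also check that the maximum is finite. Choose some $\theta$ with $0<B_i<A_i$ for all $i$, which is possible since $A_i>0$ for a positive kernel. For instance, take $W$ with $B_i$ small by scaling the weights toward a basis element with small mass beyond $y_i$. If that is delicate, I would simply restrict $\Theta$ to the compact subset $\{\theta:\varepsilon\le B_i(\theta)\le A_i-\varepsilon\}$, which restores genuine continuity. Once this is in place, White's Theorem 2.1 yields a measurable $\theta_n$ maximizing $\widetilde{\mathcal{L}}_{m,p,d}$ for every $n$.
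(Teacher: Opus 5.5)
Your proposal is correct and follows essentially the same route as the paper, whose proof simply checks Assumptions A$_1$--A$_2$ of \cite{Whi82} (a density for $(Y,\delta)$ with respect to $\lambda\otimes(\delta_0+\delta_1)$, measurability in the data, continuity in $\theta$ on a compact $\Theta$) and then invokes White's existence theorem. You are more careful than the paper in identifying $\Theta$ as a product of unit spheres and in addressing the logarithmic boundary. There it already suffices to note that the likelihood $\prod_{i:\delta_i=1}(A_i-B_i)_+\prod_{i:\delta_i=0}B_i$ is real-valued, measurable in the sample and continuous in $\theta$, so Jennrich's lemma applies directly and you need neither the finiteness check nor the data-dependent (possibly empty) restriction of $\Theta$.
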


\noindent
Once the existence of the maximum likelihood estimator is guaranteed, the next step is to study its asymptotic properties, in particular its consistency.

\begin{prop}[Consistency]\label{consistance}
The estimator $\theta_n$ converges almost surely to the true parameter $\theta$:
$\theta_n \xrightarrow{\text{a.s.}} \theta \quad \text{as } n \to \infty$,
for almost every sequence $(Y_i,\delta_i)_{i\geq 1}$ of censored observations.
\end{prop}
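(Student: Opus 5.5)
We will deduce the statement from White's (\cite{Whi82}) Theorem 2.2 on quasi-maximum likelihood estimators under possible misspecification. In that result the estimator converges almost surely to the minimiser of the Kullback--Leibler criterion $\text{KL}(g\|h;\theta)$ over the parameter space. The plan therefore has two parts. First, check White's assumptions A1--A3 for the density $h(t,\ell;\theta)$ built from (\ref{estfydelta=0}) and (\ref{kernel}). Second, show that the minimiser of $\text{KL}(g\|h;\cdot)$ is unique and equals the true $\theta$; this is how the phrase ``true parameter'' is to be read.

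\textbf{Compactness and continuity.} The parameter set $\Theta_1\times\Theta_2=\mathcal{S}^{d-1}\times\mathcal{S}^{mp-1}$ is compact, by Lemmas \ref{lemI} and \ref{lrmIII}. The map $\theta\mapsto h(t,\ell;\theta)$ is a polynomial in $(W,V)$. This is because $\overline{F}^d_{3,W}(t)=W^TM(t)W$ and $\int_t^b f^{m,p}_{1,2,V}(u,t)\,du=V^T\big(\int_t^b\boldsymbol{\xi}\boldsymbol{\xi}^T du\big)V$. Hence $h$ is continuous in $\theta$ for every $(t,\ell)$ and measurable in $(t,\ell)$ for every $\theta$. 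This is also the setting used in Proposition \ref{existance}, which we take as given for the measurability of $\theta_n$.

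\textbf{Domination and the uniform law of large numbers.} We need an integrable envelope for $\sup_\theta|\log h(Y,\delta;\theta)|$. For the upper bound, use Cauchy--Schwarz, $\|W\|=\|V\|=1$, and the boundedness of the basis functions on $I$. These give $h\le \max(\widetilde{f}_{Y,n},\,c)$ for some constant $c$. Assumption $(\mathbf{H_1})$ provides $L^2$ control of the true sub-densities, so the expectation of the $\log$ of this bound is finite. The lower bound is the delicate direction. We would restrict to (or assume, via $(\mathbf{H_2})$(a)) a region where $\widetilde{f}_{Y,n}-\widetilde{f}^{m,p,d}_{Y,\delta=0}$ and $\widetilde{f}^{m,p,d}_{Y,\delta=0}$ stay bounded away from zero on the support of the data. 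By compactness and continuity, this gives $\inf_\theta h>0$ there. White's A2 then yields
\[
\sup_{\theta}\Big|\widetilde{\mathcal{L}}_{m,p,d}(\theta;S)-\mathbb{E}\log h(Y,\delta;\theta)\Big|\xrightarrow{\text{a.s.}}0 .
\]
The kernel estimator (\ref{kernel}) depends on $n$. We handle this by using the a.s.\ uniform convergence $\widetilde{f}_{Y,n}\to f_Y$, valid under standard bandwidth conditions $h\to0$, $nh/\log n\to\infty$. This lets us replace $\widetilde{f}_{Y,n}$ by $f_Y$ inside the logarithm at a uniform $o(1)$ a.s.\ cost, using the lower bound from the previous step.

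\textbf{Identifiability (main obstacle).} It remains to show that $\theta\mapsto\text{KL}(g\|h;\theta)$ has a unique minimiser. Equality of the two sub-densities determines $W^TM(t)W$ and $V^T\big(\int_t^b\boldsymbol{\xi}\boldsymbol{\xi}^T du\big)V$ for all $t$, but only up to the sign symmetries $W\mapsto -W$, $V\mapsto -V$. It may also leave a trade-off between the $W$-factor and the $V$-factor in the product $B(t)$. We plan to handle this in three moves:
\begin{enumerate}
\item remove the sign ambiguity through the positivity restriction in $(\mathbf{H_2})$(a), $\theta$ interior to the positive orthant;
\item separate the two factors using the $\delta=1$ component, which involves $f_3\overline{F}_{1,2}$ and so carries independent information about $W$;
\item invoke the regularity and nonsingularity conditions $(\mathbf{H_2})$(b),(c) to obtain local uniqueness, and combine this with compactness to obtain global uniqueness.
\end{enumerate}
Given uniqueness, the standard Wald argument applies: the sample criterion converges uniformly to a continuous function with a unique extremum on a compact set. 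Therefore $\theta_n\to\theta$ almost surely.
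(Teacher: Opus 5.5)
\textbf{Verdict: genuine gap in A3(b) (identifiability).} Your architecture is the paper's: White's Theorem 2.2 via A1--A3, with A1--A2 from the existence result, an envelope for $\log h$ for A3(a), and uniqueness of the Kullback--Leibler minimizer for A3(b). Your three moves do not close A3(b).

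In the working model, $h(t,0;\theta)=B_\theta(t)$ and $h(t,1;\theta)=\widetilde f_{Y,n}(t)-B_\theta(t)$, where $B_\theta(t)=\overline F^{d}_{3,W}(t)\int_t^b f^{m,p}_{1,2,V}(u,t)\,du$, and $\widetilde f_{Y,n}$ does not depend on $\theta$. So the sample criterion and its limit depend on $\theta$ \emph{only} through the single univariate function $t\mapsto B_\theta(t)$.

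\begin{itemize}
\item \textbf{Move 2 fails.} It appeals to the $f_3\overline F_{1,2}$ term of the \emph{true} sub-density $f_{Y,\delta=1}$. That term is never parametrized in $h$, so the $\delta=1$ component carries no information about $W$ beyond what $B_\theta$ already contains.
\item \textbf{Injectivity fails in general.} Identifiability therefore requires $\theta\mapsto B_\theta$ to be injective, and it need not be. Take the tensor-product basis $\xi_{i,j}(u,t)=\phi_i(u)\phi_j(t)$ with product weight, and $d=m=2$, $p=1$. Then $\int_t^b f^{2,1}_{1,2,V}(u,t)\,du=\phi_1(t)^2\rho(t)\,V^\top M(t)V$. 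Hence $B_{(W,V)}(t)=\phi_1(t)^2\rho(t)\,(W^\top M(t)W)(V^\top M(t)V)$ is invariant under $(W,V)\mapsto(V,W)$.
\item \textbf{Consequence for moves 1 and 3.} Any minimizer with $W^\ast\neq V^\ast$ has a distinct twin that gives the same likelihood for every sample. Positivity (move 1) does not remove the twin. Conditions $(\mathbf{H_2})$(b),(c) concern a single point and cannot exclude it either. This is also why move 3 is invalid as logic: nondegeneracy at each minimizer plus compactness gives finitely many isolated minimizers, not a unique one.
\end{itemize}

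\textbf{Steps that are assumed rather than proved.}
\begin{itemize}
\item \textbf{No envelope on the full parameter set.} On the full product of spheres there is no integrable envelope. For rank-one $V$, $v_{i,j}=x_iy_j$ with $p\ge 2$, $B_\theta$ vanishes at the zeros of $\sum_j y_j\phi_j(t)$. These zeros sweep an interval as $y$ varies, so $\sup_\theta|\log h(t,0;\theta)|=+\infty$ on a set of positive probability. In addition, $\widetilde f_{Y,n}-B_\theta$ can be negative at data points. Restricting $\Theta$ is therefore necessary, but then you must show that both $\theta_n$ and the pseudo-true value lie in the restricted set.
\item \textbf{Kernel replacement.} Replacing $\widetilde f_{Y,n}$ by $f_Y$ inside the logarithm at uniform $o(1)$ cost needs $f_Y-B_\theta$ bounded away from zero uniformly in $t$. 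That is impossible on $[0,\infty)$, where $f_Y(t)\to0$. Moreover, the plain kernel estimator is not uniformly consistent at $t=0$. A truncation to a compact subinterval is needed here too.
\end{itemize}

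\textbf{You cannot borrow the paper's A3(b) argument.} The paper takes $\theta^\ast$ to be the $\mathbb{L}^2$ projection coefficients and gets uniqueness from uniqueness of $\mathbb{L}^2$ projection as $m,p,d\to\infty$. That argument concerns approximation of $(f_{1,2},\overline F_3)$, not the minimizer of the fixed-degree criterion, so it does not fill your step.

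\textbf{What a correct result needs.} A valid consistency result needs one of the following:
\begin{itemize}
\item an explicit identifiability hypothesis, namely injectivity of $\theta\mapsto B_\theta$ on a restricted $\Theta$; or
\item a working model in which the $\delta=1$ sub-density is itself parametrized through $(W,V)$ rather than through $\widetilde f_{Y,n}$.
\end{itemize}
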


\noindent
Having established consistency, we now proceed to refine the analysis of the estimator’s asymptotic behavior.

\noindent
While Proposition \ref{existance} establishes the existence of a ML estimator, uniqueness must still be verified. This allows us, in the next proposition, to show the identifiability of $\theta$ and the asymptotic normality for the parameter vector $\theta_n$.

\begin{prop}[Asymptotic Normality]\label{aymnormal}
Under Assumptions $(\mathbf{H_1})-(\mathbf{H_2})$ and for a fixed $m,p,d \geq 0$, $\theta$ is identified and 
$$\sqrt{n} \left(\theta_n- \theta \right) \underset{n \to +\infty}{\overset{\mathcal{L}}{\longrightarrow}} \mathcal{N} \left(0,C \left(\theta\right) \right),$$
\end{prop}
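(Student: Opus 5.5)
The plan is to derive the statement from White's (1982) quasi-maximum-likelihood theory \cite{Whi82}, as announced before Proposition \ref{existance}. We regard the truncated model $h(t,\ell;\theta)$ as a possibly misspecified parametric family and take the target $\theta$ to be the minimizer of $\text{KL}(g\|h;\theta)$ over the product of spheres $\Theta_1\times\Theta_2=\mathcal{S}^{d-1}\times\mathcal{S}^{mp-1}$. Then $A(\theta)$ is the expected Hessian of $\log h$ and $B(\theta)$ the expected outer product of its score. The limiting covariance is the sandwich
\[
C(\theta)=A(\theta)^{-1}B(\theta)A(\theta)^{-1},
\]
computed in local coordinates on the spheres. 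In practice this means using the Riemannian Hessian of Lemma \ref{lemIV} and projecting the scores (\ref{gradW}) and (\ref{gradV}) by $I-WW^\top$ and $I-VV^\top$. This removes the radial directions, which would otherwise make $A$ singular.

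\textbf{Identification.} Since $m,p,d$ are fixed, $h$ depends on $\theta$ only through the quadratic forms (\ref{estF3}) and (\ref{estf12}), so it is a smooth (polynomial) function of the coefficients. By Lemmas \ref{lemI} and \ref{lrmIII}, each admissible $\theta$ corresponds to a genuine survival function and a genuine density. The sign ambiguity $W\mapsto -W$, $V\mapsto -V$ is removed by the positivity restriction in $(\mathbf{H_2})$(a). Under $(\mathbf{H_2})$(c), $A(\theta)$ has constant rank near $\theta$, and $(\mathbf{H_2})$(b) gives nonsingularity. Together these make $\theta$ an isolated minimizer of the KL criterion, hence locally identified; this is White's Theorem 2.2-type condition. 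Global uniqueness on the compact parameter set then follows from the consistency statement, Proposition \ref{consistance}.

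\textbf{Asymptotic normality.} By Proposition \ref{existance} and Proposition \ref{consistance}, $\theta_n$ exists and $\theta_n\to\theta$ almost surely, so eventually $\theta_n$ is interior and satisfies the Riemannian first-order condition. I would Taylor-expand the projected score along the geodesic joining $\theta$ to $\theta_n$, using the exponential-map retraction of Algorithm \ref{alg:snr-p}:
\[
0=\sqrt n\,\nabla\widetilde{\mathcal L}(\theta)+\mathrm{Hess}\,\widetilde{\mathcal L}(\bar\theta_n)\sqrt n(\theta_n-\theta).
\]
The argument then has three parts.
\begin{enumerate}
\item A uniform law of large numbers for the Hessian on a neighbourhood of $\theta$. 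This follows from compactness, continuity in $\theta$, and a dominating function for the second derivatives listed above. The terms $1/B_i$ and $1/(A_i-B_i)$ must stay bounded away from zero, and $(\mathbf{H_1})$ gives the needed square-integrability of the integrals $\int_t f_{1,2}$ and of $f_3$.
\item A CLT for the score, with covariance $B(\theta)$, since the score has mean zero at the KL minimizer.
\item Slutsky's lemma to combine the two.
\end{enumerate}

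\textbf{Main obstacle.} The hardest step is that $h$ contains the kernel estimator $\widetilde f_{Y,n}$, so the summands are not i.i.d. functions of $(Y_i,\delta_i)$. I would first replace $\widetilde f_{Y,n}$ by $f_Y$ and show that the induced perturbation of $\sqrt n$ times the score is $o_P(1)$. This needs a bandwidth condition such as $nh^4\to0$ and $nh/\log n\to\infty$, so that $\sup_t|\widetilde f_{Y,n}-f_Y|=o(n^{-1/4})$, together with a lower bound on $f_Y-f_{Y,\delta=0}$ on the support. If this remainder is not negligible, the plan is to accept an additional variance term in $C(\theta)$ through a U-statistic projection of the kernel contribution.
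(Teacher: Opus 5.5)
Your route differs from the paper's. The paper invokes White's Theorem~3.2: it takes A1--A3 from Proposition~\ref{consistance}, lets $(\mathbf{H_2})$ play the role of White's A6, and checks A4--A5 by differentiating $\log h$ and asserting domination. Throughout, it treats $\widetilde f_{Y,n}$ as a fixed function and $\theta$ as an interior point of a Euclidean set. Your chart-based expansion on $\mathcal{S}^{d-1}\times\mathcal{S}^{mp-1}$ is the right repair of the last point. The reason, though, is not radial singularity of $A$: $\log h$ is not scale invariant in $W$ or $V$. The reason is that the constrained KL minimizer is not a zero of the Euclidean expected score.

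\textbf{Kernel step.} The genuine gap is in the step you flag yourself: the kernel perturbation is \emph{not} $o_P(1)$. Expanding the $\delta_i=1$ part of the score in $\widetilde f_{Y,n}-f_Y$, the linear term is
\[
\frac{1}{\sqrt n}\sum_{i=1}^n \delta_i\,a(Y_i)\bigl(\widetilde f_{Y,n}(Y_i)-f_Y(Y_i)\bigr),\qquad a(t)=\frac{\partial_\theta \widetilde f^{m,p,d}_{Y,\delta=0}(t)}{\bigl(f_Y(t)-\widetilde f^{m,p,d}_{Y,\delta=0}(t)\bigr)^2}.
\]
This is a V-statistic whose H\'ajek projection is $n^{-1/2}\sum_{j=1}^n\{\kappa(Y_j)-\mathbb{E}\,\kappa(Y)\}+O(\sqrt n\,h^2)+o_P(1)$, with $\kappa=f_{Y,\delta=1}\,a$. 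Since $\mathbb{E}[\partial s/\partial f_Y\mid Y]=a(Y)\,\mathbb{P}(\delta=1\mid Y)\neq 0$, the score is not orthogonal to the nuisance $f_Y$. Hence this term is $O_P(1)$ and non-degenerate, even under correct specification, where $\kappa=\partial_\theta\widetilde f^{m,p,d}_{Y,\delta=0}/f_{Y,\delta=1}$. Your uniform rate $o(n^{-1/4})$ only disposes of the quadratic remainder, and $nh^4\to0$ only of the bias. That rate also requires $\sqrt n\,h/\log n\to\infty$, not merely $nh/\log n\to\infty$.

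So your fallback is the actual outcome. The argument yields $\mathcal{N}(0,A^{-1}\Omega A^{-1})$ with $\Omega=\mathrm{Var}\{s(Y,\delta;\theta)+\kappa(Y)\}$, where $s$ is the score with $f_Y$ in place of $\widetilde f_{Y,n}$. It does not yield the sandwich $A^{-1}BA^{-1}$ that you announced and that White's theorem delivers. The paper's proof reaches the latter only by applying i.i.d.\ theory to summands that depend on the whole sample.

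\textbf{Identification step.} This step is also wrong as written. $(\mathbf{H_2})$(b) concerns $B$, not $A$. For fixed $m,p,d$ the truncated model is in general misspecified, so $A\neq-B$. Constant rank of $A$ plus nonsingular $B$ therefore does not make $\theta$ an isolated minimizer.

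The logic runs the other way. Isolation comes from uniqueness of the KL minimizer, which is White's A3(b); it is an \emph{input} to Proposition~\ref{consistance}, not a consequence of consistency. Then $(\mathbf{H_2})$(c) gives nonsingularity of $A(\theta)$ by Rothenberg's argument. If $A$ had constant deficient rank near $\theta$, take an integral curve $c(s)$ of a null field, so $A(c(s))\dot c(s)=0$. Along it, $\mathbb{E}[\partial_\theta\log h(Y,\delta;c(s))]$ stays $0$, so the KL criterion is constant along $c$, contradicting isolation.

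That uniqueness is a real assumption here, beyond the sign flips: $h$ sees $(W,V)$ only through the single function $t\mapsto\widetilde f^{m,p,d}_{Y,\delta=0}(t)$.

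\textbf{Domination.} Finally, the domination behind your uniform LLN needs $\widetilde f^{m,p,d}_{Y,\delta=0}$ and $f_Y-\widetilde f^{m,p,d}_{Y,\delta=0}$ controlled from below. $(\mathbf{H_1})$ does not give this, and it fails on $[0,\infty)$, where both vanish in the tail. The paper's appeal to A4 for this is equally unfounded, so you need an explicit tail condition or trimming.
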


\noindent
The proof of Proposition \ref{aymnormal} is done by verifying the conditions of Theorem 3.2 in \cite{Whi82}.

\section{Experimental results}
\label{sec:sim}
We now assess the performance of the proposed method using both simulated and real datasets.
\subsection{Simulation study}
We firstly investigate the performance of the proposed estimators within a simulation study. To this end, we perform $N = 100$ Monte Carlo simulations each with  three sample sizes $n = 50, 100,$ and $200$. The random variables $X_i \sim \mathcal{E}(a_i)$, for $i = 1, 2, 3$, are generated from exponential distributions with respective rate parameters $a_1 = 2$, $a_2 = 1.5$, and $a_3 = 3$.
For the bivariate survival distribution associated with $(X_1, X_2)$, we examine the following settings:
\begin{enumerate}
\item[$\bullet$] Model (i): The dependence structure between $X_1$ and $X_2$ is defined via a Clayton survival copula: $
C_\theta(u,v)
=
\left( \max \big \{
u^{-\theta}
+
v^{-\theta}
-
1 ; 0 \big \}
\right)^{-1/\theta}
$, with a parameter $\theta$ fixed at 4.
\item[$\bullet$] Model (ii): The dependence structure between $X_1$ and $X_2$ is defined via a Gumbel survival copula: $
C_\theta(u,v)
=
\exp\!\left(
-
\left[
(-\ln u)^\theta
+
(-\ln v)^\theta
\right]^{1/\theta}
\right)
$, with a parameter $\theta$ fixed at 4.
\end{enumerate}
An example with \(n = 50\) samples and Clayton copula is illustrated in Figure~\ref{fig:sim_data}. For visualization purposes, only 10 simulated trajectories of \(X_{1i}\), \(X_{2i}\), and \(X_{3i}\) are displayed in the top-left panel. From these variables, we define
$
T_i = \min(X_{1i}, X_{3i})
\quad \text{and} \quad
C_i = \min(X_{2i}, X_{3i})
$
shown in the top-right panel. The observed data are then obtained as
$
Y_i = \min(T_i, C_i)
$
and represented in the bottom-left panel. Finally, the bottom-right panel displays the kernel density estimate (KDE) computed from the observed data \(Y_i\).

\begin{figure}[h!]
    \centering
    \includegraphics[width=1\textwidth]{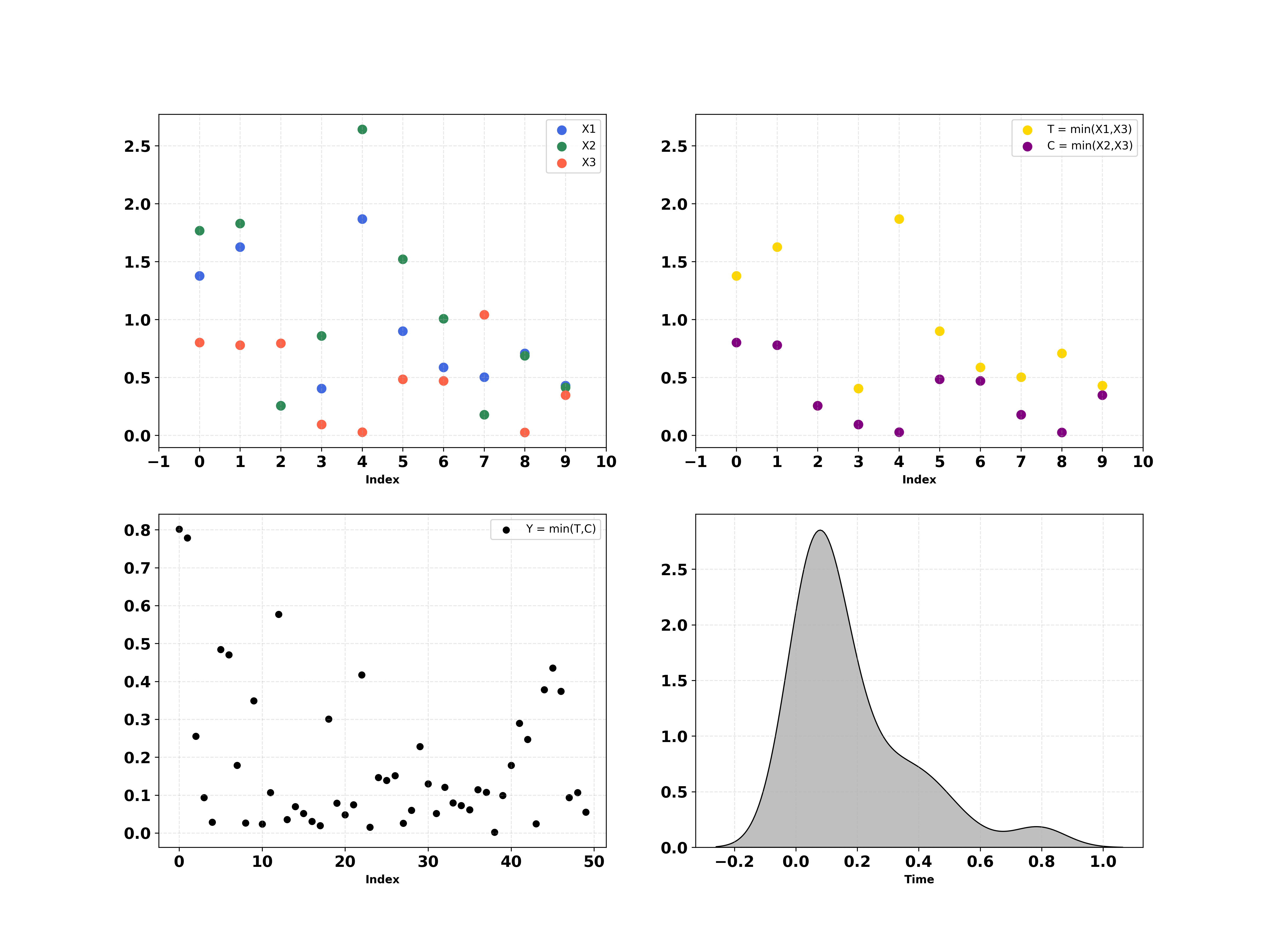} 
    \caption{Illustration of the simulated data generation process with \(n=50\) samples: $10$ simulated variables \((X_{1i},X_{2i},X_{3i})\) (top-left), construction of \((T_i,C_i)\) (top-right), all observed data \(Y_i\) (bottom-left), and the corresponding kernel density estimate from \(Y_i\) (bottom-right).}
    \label{fig:sim_data}
\end{figure}

In our experiments, we consider Laguerre polynomials which are appropriate because the lifetime variable is supported on the semi-infinite interval $I=[0, + \infty[$. Laguerre basis functions, defined by $\phi_l(t) = \frac{e^t}{l!}\frac{d^l}{dt^l}\left(e^{-t}t^l\right)$, naturally  provide an orthogonal decomposition with respect to the exponential weight function $\rho(t) = e^{-t}$. This leads to stable numerical estimation and efficient approximation of  survival functions. This is confirmed by the simulations, where alternative polynomial bases such as Legendre, Hermite, and Chebyshev yielded less accurate estimates, highlighting their inability to adequately control the structural constraints imposed by the bivariate density and the univariate survival function. In contrast, the Laguerre decomposition provides a natural compatibility with these constraints, suggesting that it is the most effective choice among the non-exhaustive set of polynomial bases considered.

Figure~\ref{fig:poly} illustrates the first five basis polynomials $\phi_l(t)$ for $l = 1, \dots, 5$, starting from 1 at $t=0$ and decaying to 0 as $t \to \infty$, with increasingly oscillatory behavior as the polynomial degree increases.
Figure~\ref{fig:survie-pdf} shows, on the left, a valid univariate survival function $\overline{F}_{W}^{5}(x)$ for $x \in [0,10]$, reconstructed from a vector of spherical random coefficients $W$ satisfying $\sum_{l=1}^{5}w_l^2 = 1 $. On the right, it shows a valid bivariate probability density function $f_{V}^{5,5}(x_1,x_2)$ for $(x_1,x_2) \in [0,10]^2$, reconstructed from a matrix of spherical random coefficients $V$ satisfying $\sum_{l_1,l_2=1}^{5}v_{l_1,l_2}^2 = 1 $.

\begin{figure}
    \centering
    \includegraphics[width=.6\textwidth]{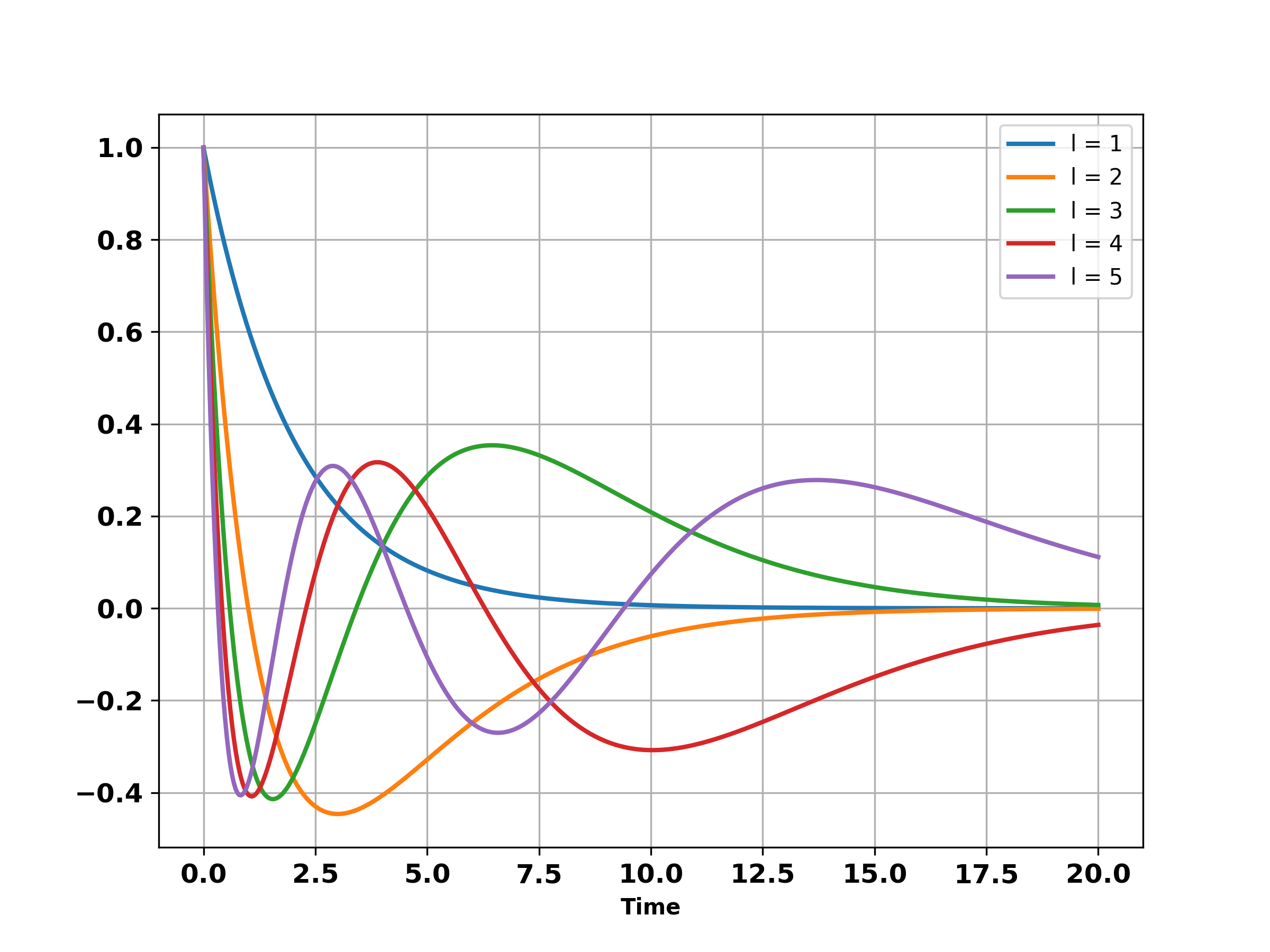} 
    \caption{First examples of Laguerre basis polynomials $\phi_l(t)$ for $l = 1, \dots, 5$.}
    \label{fig:poly}
\end{figure}

We then perform numerical maximization of the log‑likelihood function presented in Algorithms~\ref{alg:snr-p}. To select appropriate Laguerre  polynomial degrees, we fit models across various combinations of $(m, p)$, where $m, p \in \{1, 2, 3, 4, 5, 6\}$. In addition, polynomial degrees $d$ ranging from 1 to 10 are also considered. Model selection among these configurations is carried out using the AIC criterion.  Specifically, we obtain the parameter estimates for a selection of triplets  $(m,p,d)$ and compute $$\text{AIC}(m,p,d)=2(m+p+d+1)-2 \widetilde{\mathcal{L}}_{m,p,d}(\theta;S),$$ for
each triplet.
Figure \ref{fig:nll} shows that the negative log-likelihood $ - \widetilde{\mathcal{L}}_{m,p,d} $ reaches a local minimum after only a few iterations of Algorithm~\ref{alg:snr-p}, indicating fast convergence. In particular, convergence is achieved after 24 iterations for the Clayton copula and 20 iterations for the Gumbel copula.

\begin{figure}
    \centering
    \includegraphics[width=.45\textwidth]{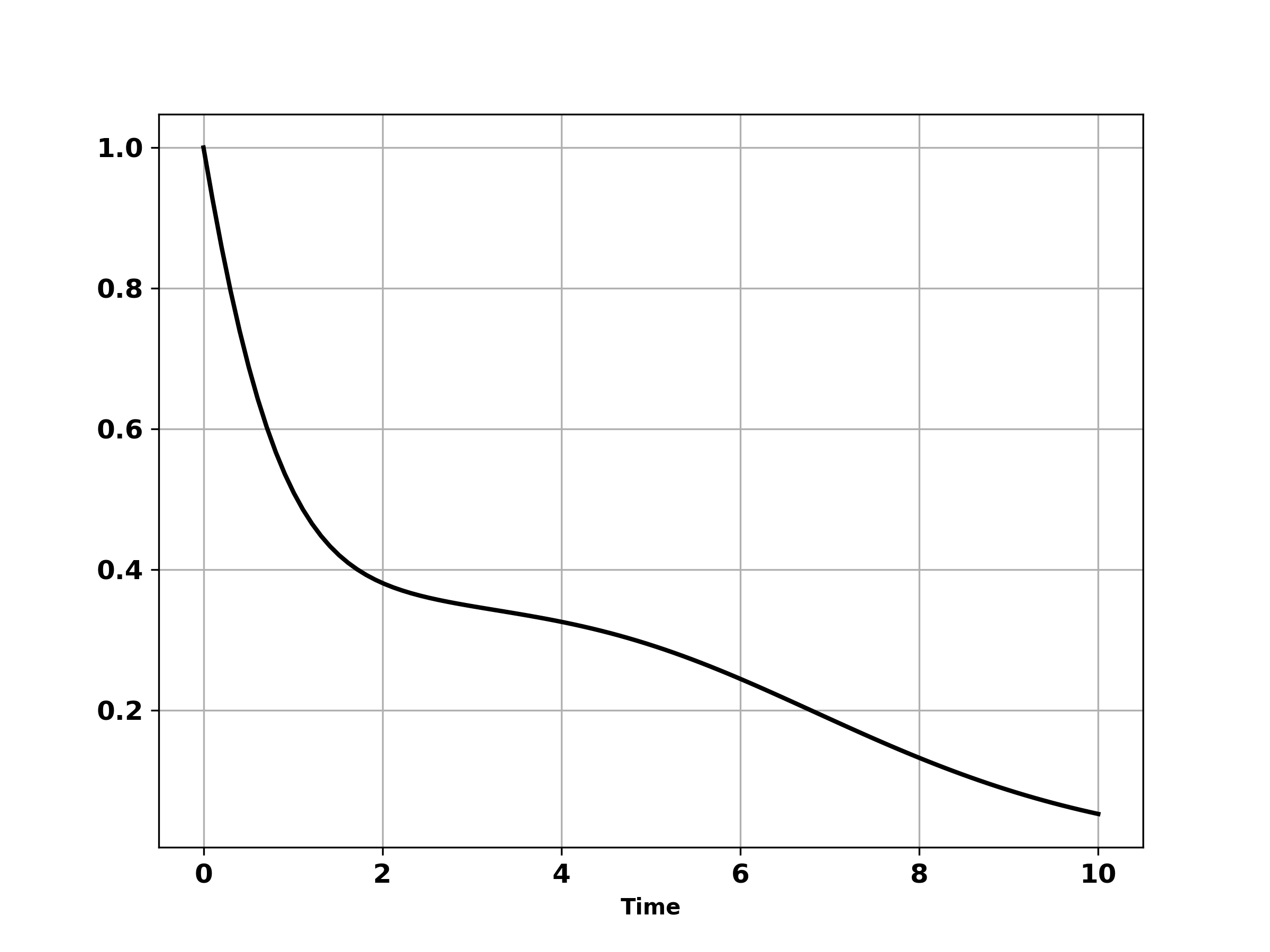}
    \includegraphics[width=.54\textwidth]{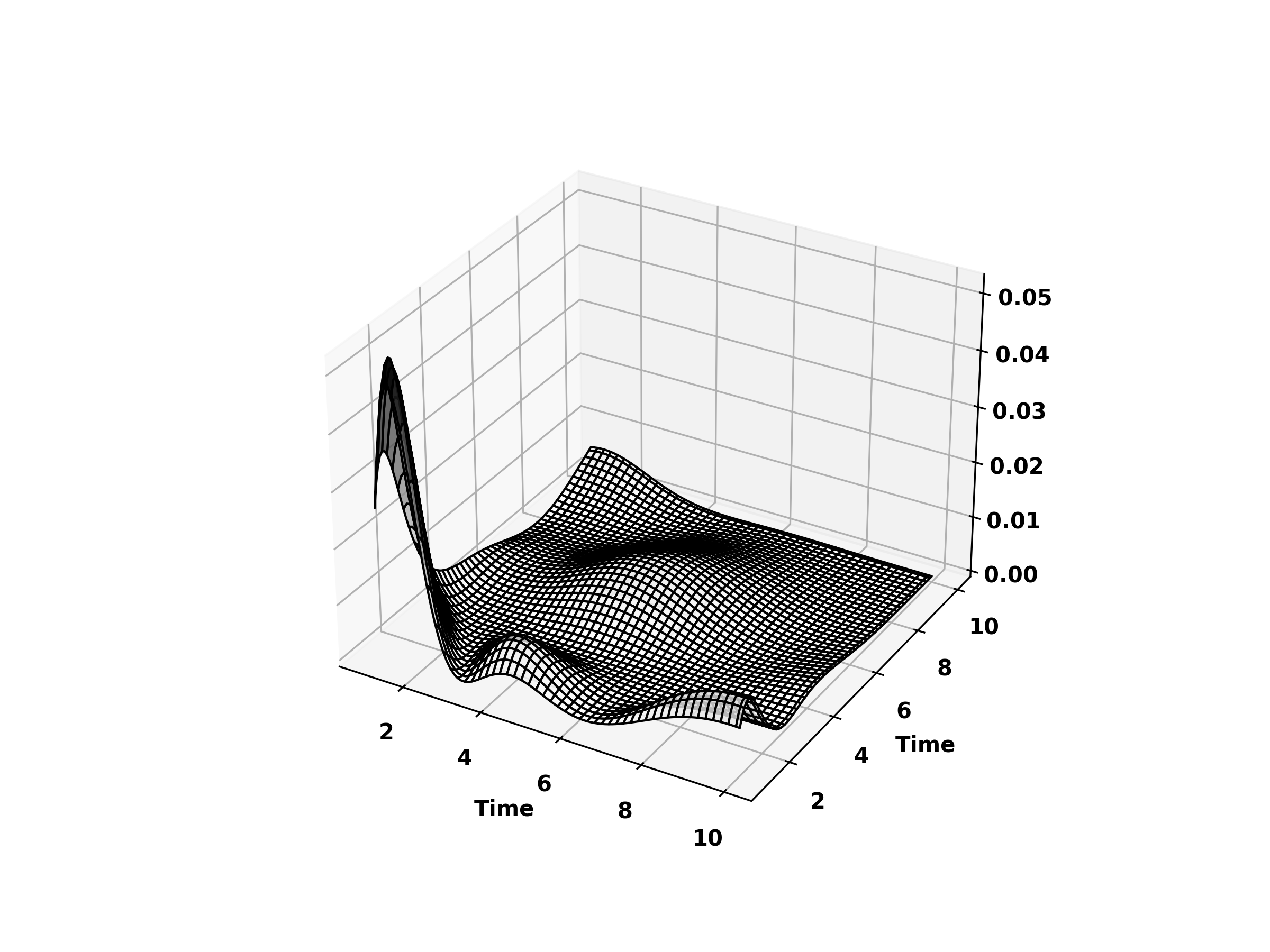}
    \caption{Reconstructed univariate survival function (left) and bivariate density function (right) from spherical random coefficients.}
    \label{fig:survie-pdf}
\end{figure}

Once the optimal degrees have been selected and the corresponding parameters estimated, we evaluate the performance of the proposed approach in approximating the joint survival function of $(T, C)$ as well as the marginal survival functions of $T$ and $C$. 
To assess the performance of the proposed estimators, we compute their bias, empirical standard deviation (SD), and mean squared error (MSE). We consider the Fréchet mean of estimated parameters obtained during the $N=100$ Monte-Carlo replications as a mean estimator giving the functional estimators of marginal and joint survival functions. 
In addition, The coverage probabilities (CP) were calculated as the proportion of Monte-Carlo iterations in which the true survival functions  fell within the empirical  confidence intervals
$$
\mathrm{CI}_{95\%}\big(\overline{F}_{n}\big) = \overline{F}_{n} \pm z_{0.975} \, \mathrm{SD}\big(\overline{F}_{n}\big),
$$
where $F_n$ is the  functional of interest  (e.g., \(\widetilde{\overline{F}}_T(t)\), \(\widetilde{\overline{F}}_C(t)\), or \(\widetilde{P}(t,s)\)).

\begin{figure}
    \centering
    \includegraphics[width=.6\textwidth]{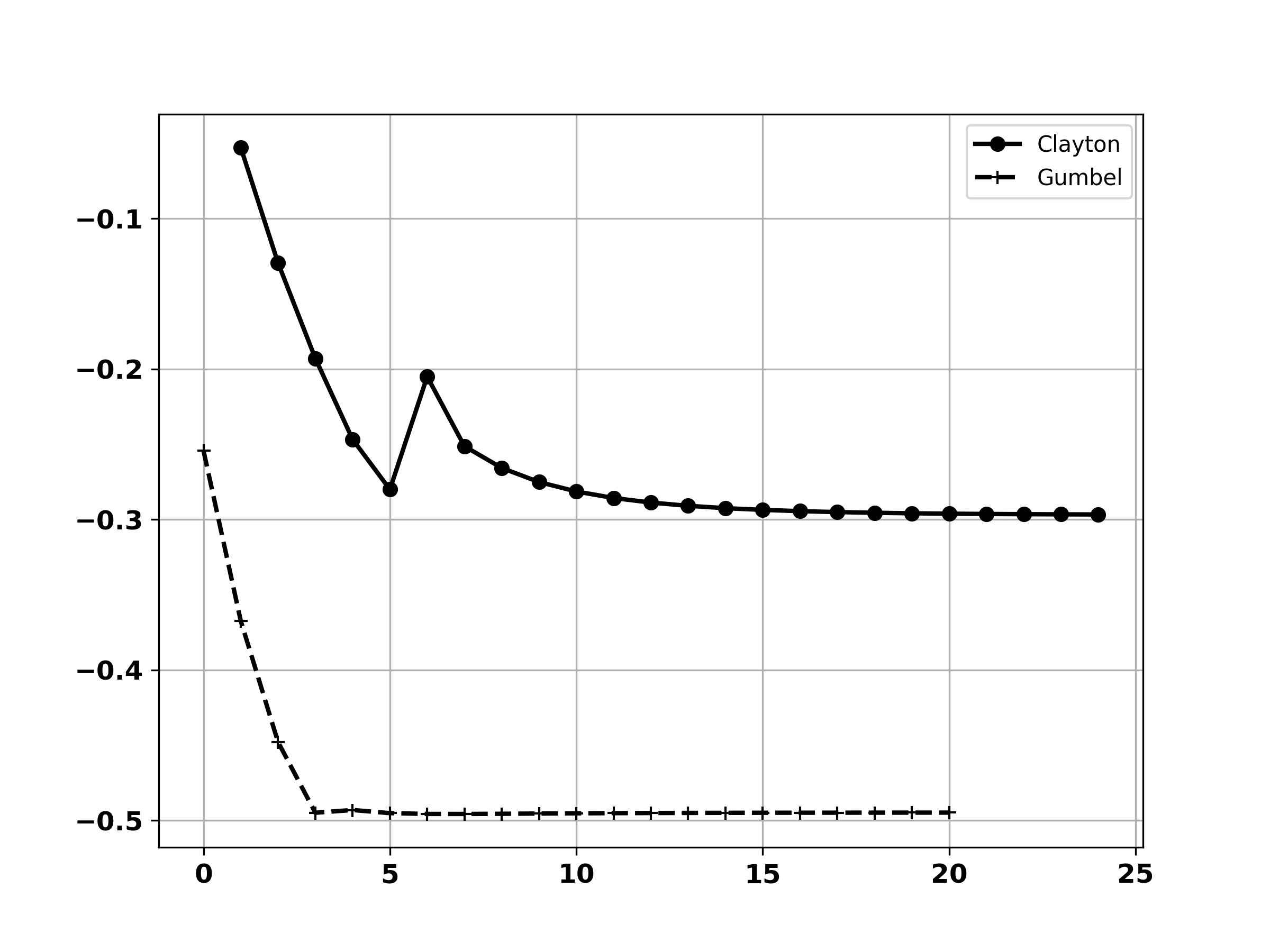} 
    \caption{Loss as the negative log-likelihood as a function of the number of iterations.}
    \label{fig:nll}
\end{figure}

We firstly illustrate the estimation of the marginal survival functions. Figures \ref{fig:sim_marginal_clayton1} and \ref{fig:sim_marginal_gumbel1} display the results of $N=100$ simulations under models (i) and (ii), respectively. The left panels show the individual estimates, while the right panels present their Fréchet mean, which serves as our proposed estimator (in gray), together with the true survival functions (in black). In each figure, the top row corresponds to $\overline{F}_T$ and the bottom row to $\overline{F}_C$. For comparison, the Kaplan--Meier estimator is also included (in blue) to highlight its behavior under dependence. These results indicate that our approach better preserves the dependence structure while providing estimates that closely match the true marginal survival functions.
Figures \ref{fig:sim_marginal_clayton2} and \ref{fig:sim_marginal_gumbel2}  display the  results of the joint survival  estimation.

Tables \ref{tab:metric_clayton}--\ref{tab:metric_gumbel} report the Monte Carlo bias, empirical standard deviation (SD), mean squared error (MSE), and empirical coverage probability (CP) of the proposed functional estimators.
The results show a clear reduction in bias as the sample size increases, supporting the consistency of the estimators. Moreover, the empirical SD decreases with \(n\), indicating improved stability for larger samples. The MSE, which is mainly driven by the variance, also decreases and tends to zero as \(n \to \infty\). Finally, the empirical coverage probability improves with increasing sample size, approaching the nominal confidence level.

\begin{figure}
    \centering
    \includegraphics[width=1\textwidth]{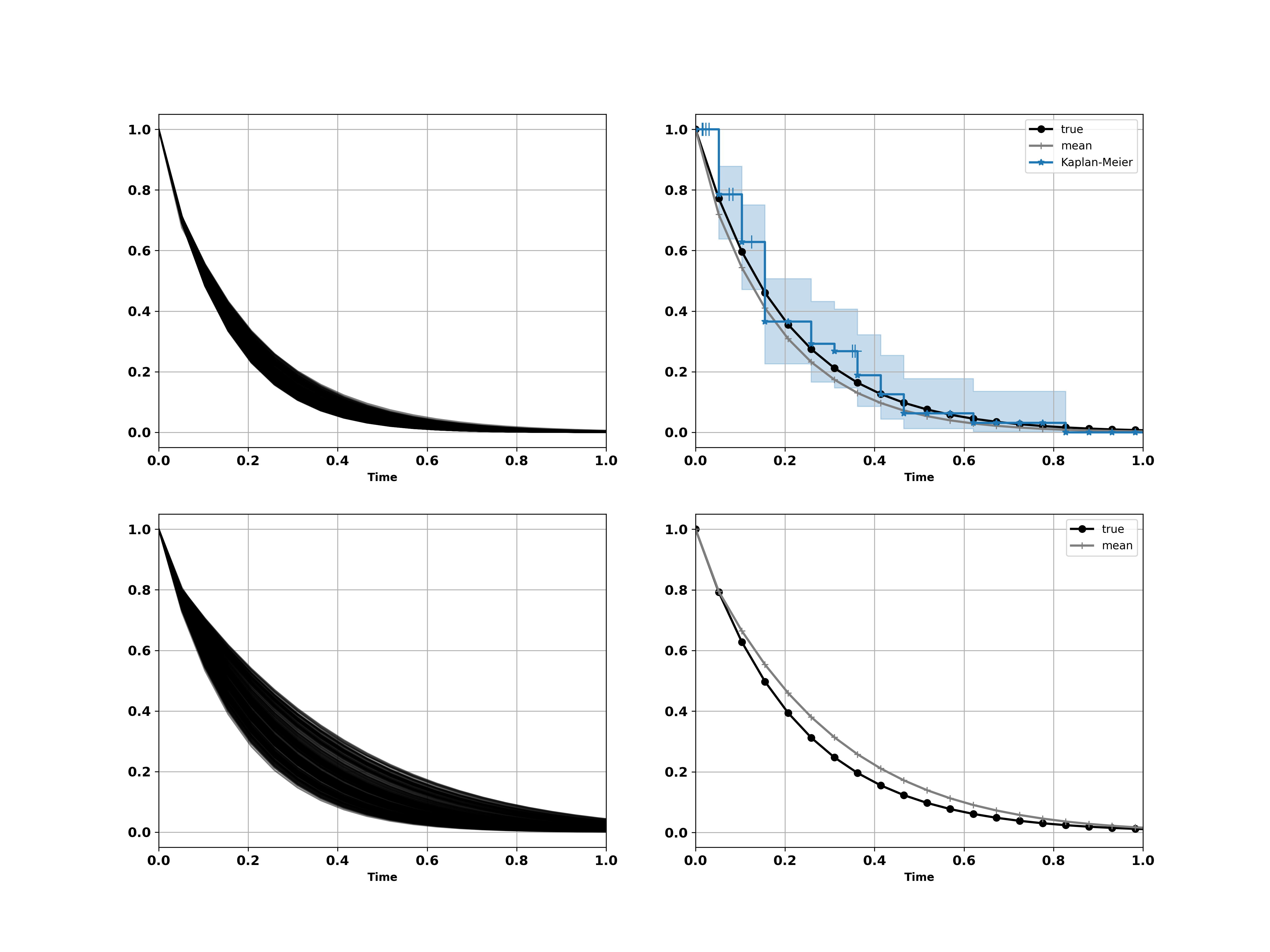} 
    \caption{Estimated marginal survival functions from 100 simulations (left) and their Fréchet mean in red compared to the true function in blue (right) under the Clayton copula: top row for $T$ and bottom row for $C$.}
    \label{fig:sim_marginal_clayton1}
\end{figure}

\begin{figure}
    \centering
    \includegraphics[width=1\textwidth]{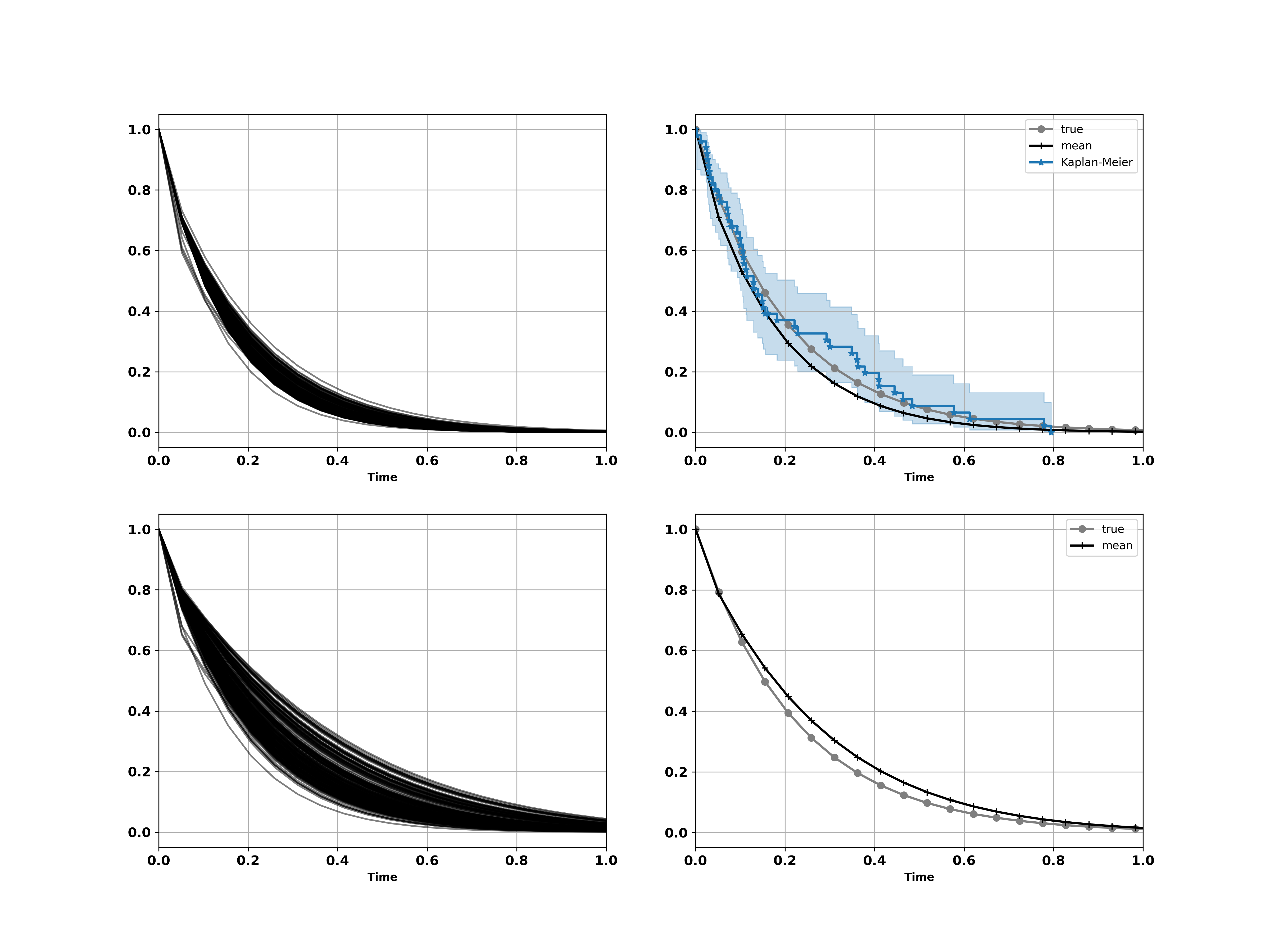} 
    \caption{Estimated marginal survival functions from 100 simulations (left) and their Fréchet mean in red compared to the true function in blue (right) under the Gumbel copula: top row for $T$ and bottom row for $C$.}
    \label{fig:sim_marginal_gumbel1}
\end{figure}

\begin{figure}
    \centering
    \includegraphics[width=1\textwidth]{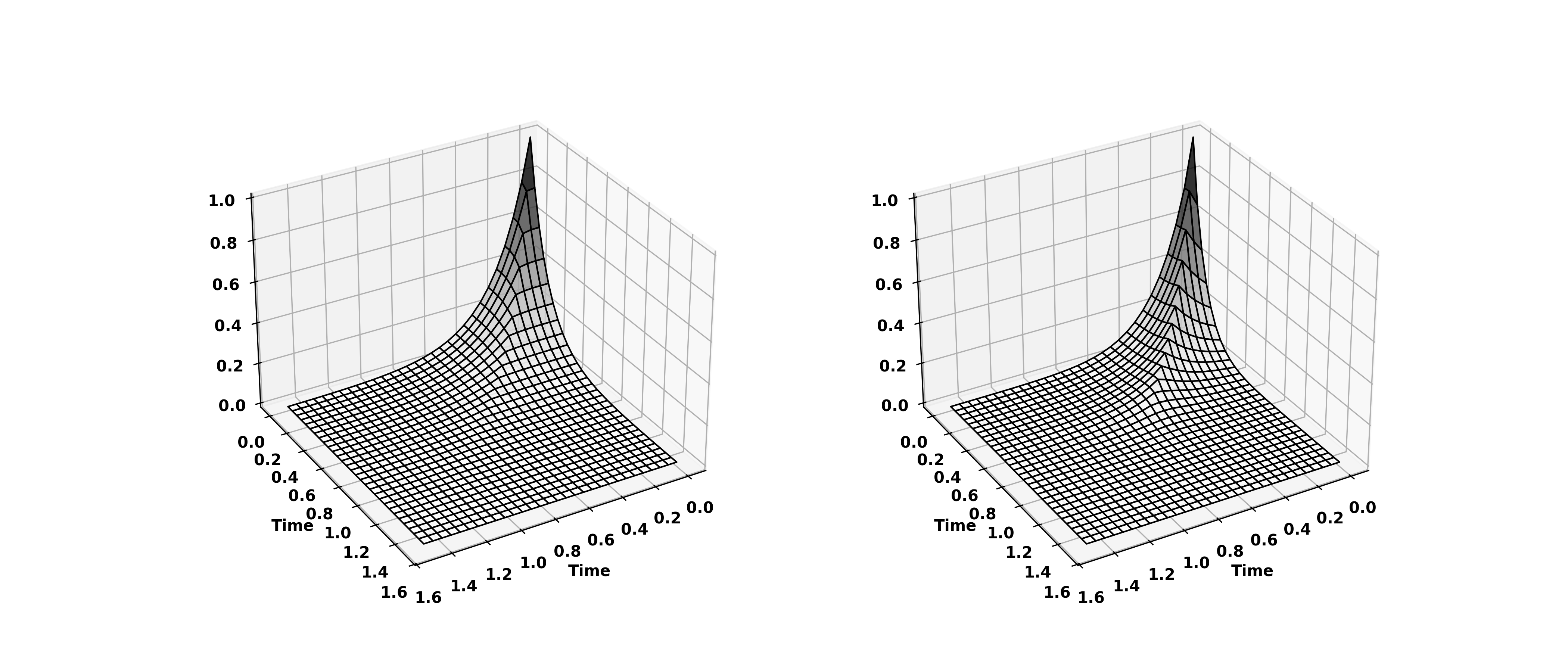} 
    \caption{True joint survival function of $(T,C)$ (left) and the Fréchet mean of estimated joint survival functions from 100 simulations (right) under the Clayton copula.}
    \label{fig:sim_marginal_clayton2}
\end{figure}

\begin{figure}
    \centering
    \includegraphics[width=1\textwidth]{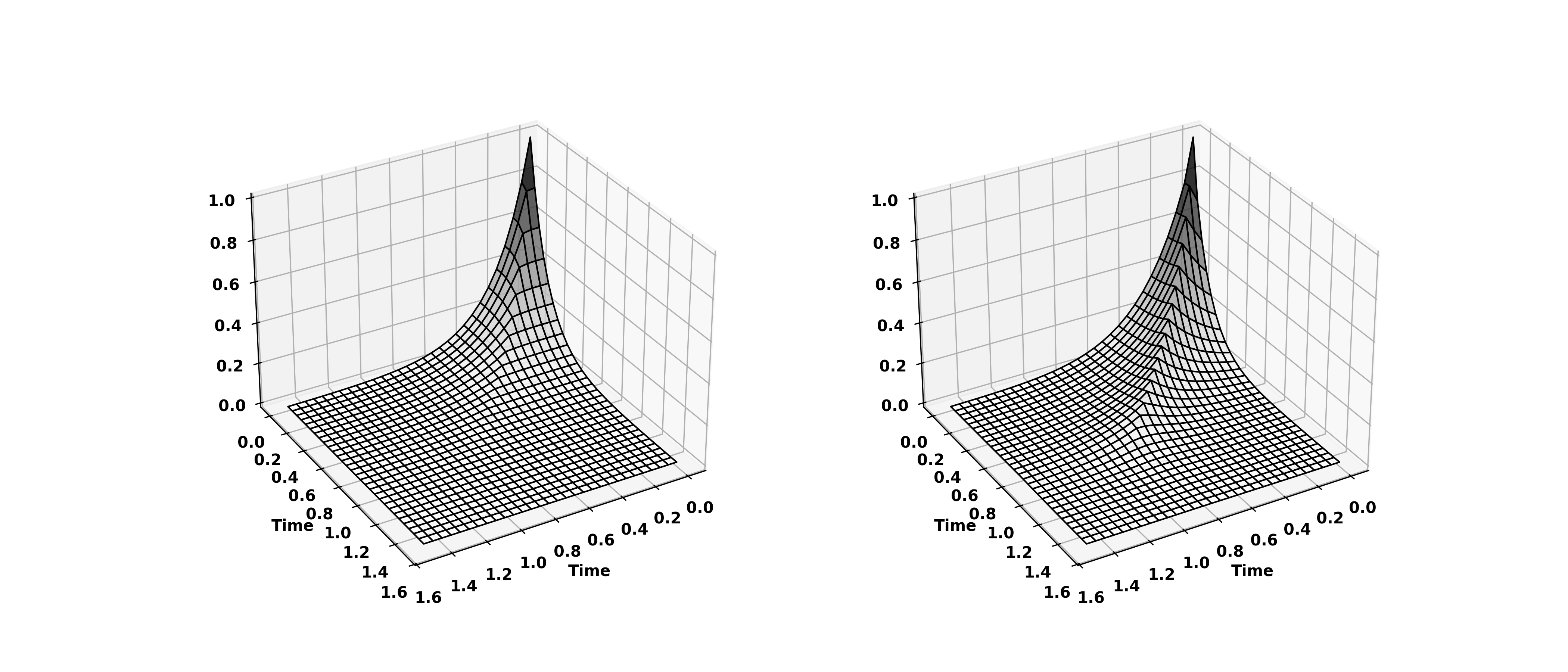} 
    \caption{True joint survival function of $(T,C)$ (left) and the Fréchet mean of estimated joint survival functions from 100 simulations (right) under the Gumbel copula.}
    \label{fig:sim_marginal_gumbel2}
\end{figure}

\begin{table}
\centering
\begin{tabular}{|c|c|c|c|c|c|c|c|c|c|}
\hline
\diagbox{Metric}{Survival} & \multicolumn{3}{c|}{$n=50$} & \multicolumn{3}{c|}{$n=100$}
& \multicolumn{3}{c|}{$n=200$}\\
\cline{2-10}
 & $T$ & $C$ & $(T,C)$ & $T$ & $C$ & $(T,C)$ & $T$ & $C$ & $(T,C)$ \\
\hline
Bias & 0.029 & 0.042  & -0.0322  & -0.022 & 0.038  & 0.0061  & 0.017 & 0.0167 & 0.006 \\
\hline
MSE & 0.0018 & 0.0044 & 0.004  & 0.001  & 0.003 & 0.00048   & 0.0005 & 0.0006 & 0.0003 \\
\hline 
SD & 0.031  & 0.0525 & 0.054  &  0.024  & 0.044 & 0.021 & 0.014  & 0.0182 & 0.017 \\
\hline
CP & 96.6 $\%$  & 90 $\%$  & 93.3 $\%$  & 97.5 $\%$ & 91.8 $\%$ & 94.7 $\%$ & 98.3 $\%$ & 93.4 $\%$ & 95.9 $\%$ \\
\hline
\end{tabular}
\caption{Results of survival estimations with Clayton copula.}
\label{tab:metric_clayton}
\end{table}

\begin{table}
\centering
\begin{tabular}{|c|c|c|c|c|c|c|c|c|c|}
\hline
\diagbox{Metric}{Survival} & \multicolumn{3}{c|}{$n=50$} & \multicolumn{3}{c|}{$n=100$}
& \multicolumn{3}{c|}{$n=200$}\\
\cline{2-10}
 & $T$ & $C$ & $(T,C)$ & $T$ & $C$ & $(T,C)$ & $T$ & $C$ & $(T,C)$ \\
\hline
Bias & 0.03 & 0.028 & 0.01 & 0.026 & 0.002 & -0.0093 & 0.02 & 0.0063 & 0.0016 \\
\hline
MSE  & 0.003 & 0.0018  & 0.0005 & 0.0014 &  0.001 & 0.0009  & 0.00022 & 
$1.4 \times 10^{-5}$ & $5.69 x 10^{-5}$  \\
\hline 
SD  & 0.027 & 0.041  & 0.030  & 0.0041  & 0.0324  & 0.018  & 0.0032 & 0.028 & 0.014 \\
\hline
CP & 96.6 $\%$ & 96.6 $\%$  & 94.7 $\%$  & 97.4 $\%$ & 97.6 $\%$ & 95.2 $\%$ & 98 $\%$ & 98.1 $\%$ & 95.7 $\%$ \\
\hline 
\end{tabular}
\caption{Results of survival estimations with Gumbel copula.}
\label{tab:metric_gumbel}
\end{table}

\newpage
\subsection{Real data examples}
We consider two real datasets to further demonstrate the effectiveness of our method.
\subsubsection{Medical data}
In this study, we analyze data from the Diabetic Retinopathy Study (DRS), a clinical trial sponsored by the National Eye Institute that investigated whether laser photocoagulation could postpone the development of blindness among individuals affected by diabetic retinopathy. Further information regarding the design and findings of the trial is available in \cite{Cso89} and \cite{Fei15}.
The dataset includes 71 patients who met the study eligibility criteria. At enrollment, one eye of each participant was randomly assigned to receive laser therapy using either argon laser, xenon arc, or a combination of the two techniques, whereas the fellow eye served as an untreated control.

In this framework, we denote by $T$ the time until blindness for the eye receiving laser therapy, while $C$ represents the (potentially censored) time to blindness for the untreated eye. For each patient, the observed quantity is
$Y=\min(T,C)$, together with the indicator $\delta$ identifying the corresponding failure mechanism. In certain cases, one may observe $T=C$, indicating that both eyes can become blind at the same time.
All observation times are converted into years by dividing the recorded values (in days) by 365. As discussed in \cite{Bar24}, these data can be appropriately described using a Marshall--Olkin Weibull model, where three independent random variables $X_i \sim \mathcal{W}(k,\lambda_i)$ for $i=1,2,3$ are assumed, and the observed times are defined by $T=\min(X_1,X_3)$ and $C=\min(X_2,X_3)$.
Applying the maximum likelihood estimation procedure proposed in \cite{Fei15}, we obtain the estimates
$\widetilde{k}=1.558$, $\widetilde{\lambda}_1=0.184$, $\widetilde{\lambda}_2=0.224$, and $\widetilde{\lambda}_3=0.059$. 
Figure \ref{fig:medical} displays the estimated marginal survival functions of $T$. The resulting survival curves obtained from the Kaplan--Meier (KM) estimator and from our proposed method are very similar.
This suggests that, for this dataset, it is reasonable to work under the GMO framework, as shown in \cite{esc24}, where it is established that within the GMO model the KM estimator is a consistent estimator of the survival function of $T$. Moreover, the Bayesian estimator in \cite{pin15} appears to deviate more from our estimator. This discrepancy supports the idea that the Weibull parametric assumption may not be fully adequate for this dataset, whereas our approach is nonparametric and therefore more flexible.

\subsubsection{UEFA Champion's League}
 We consider a soccer dataset from the UEFA Champions League previously studied in \cite{Mei07}. The dataset is restricted to matches satisfying two conditions: (i) the home team scored at least one goal, and (ii) at least one goal in the match was scored directly from a kick action (penalty, free kick, or other type of direct shot) by either team.

Let $T$ denote the time (in minutes) of the first goal scored from a kick by either side, and let $C$ denote the time of the first goal scored by the home team regardless of its type. For such nonnegative continuous observations, three situations may occur: $\{T < C\}$, $\{T > C\}$, or $\{T = C\}$.
As discussed in \cite{pin15}, a Bayesian study implemented via OpenBUGS suggests that these observations can be adequately modeled using an EGMO structure with exponential marginals. This leads to the assumption of three exponential random variables $X_i \sim \mathcal{E}(\lambda_i)$ for $i=1,2,3$. The corresponding Bayesian estimates are given by $\widetilde{\lambda}_1=0.001$, $\widetilde{\lambda}_2=0.375$, and $\widetilde{\lambda}_3=0.04$, with the additional assumption that $(X_1,X_2)$ is independent of $X_3$.

The marginal survival function of $T$, displayed in Figure \ref{fig:foot}, shows a clear discrepancy between the KM estimator and the two alternative approaches. In particular, the KM curve differs notably from both our proposed estimator and the Bayesian estimator of \cite{pin15}, which remain closely aligned.
This lack of agreement raises concerns regarding the validity of the GMO modeling assumption discussed in \cite{esc24}. More generally, these results emphasize the importance of rigorous model checking and validation procedures in survival analysis with EGMO models.

\begin{figure}
    \centering
    \includegraphics[width=0.5\textwidth]{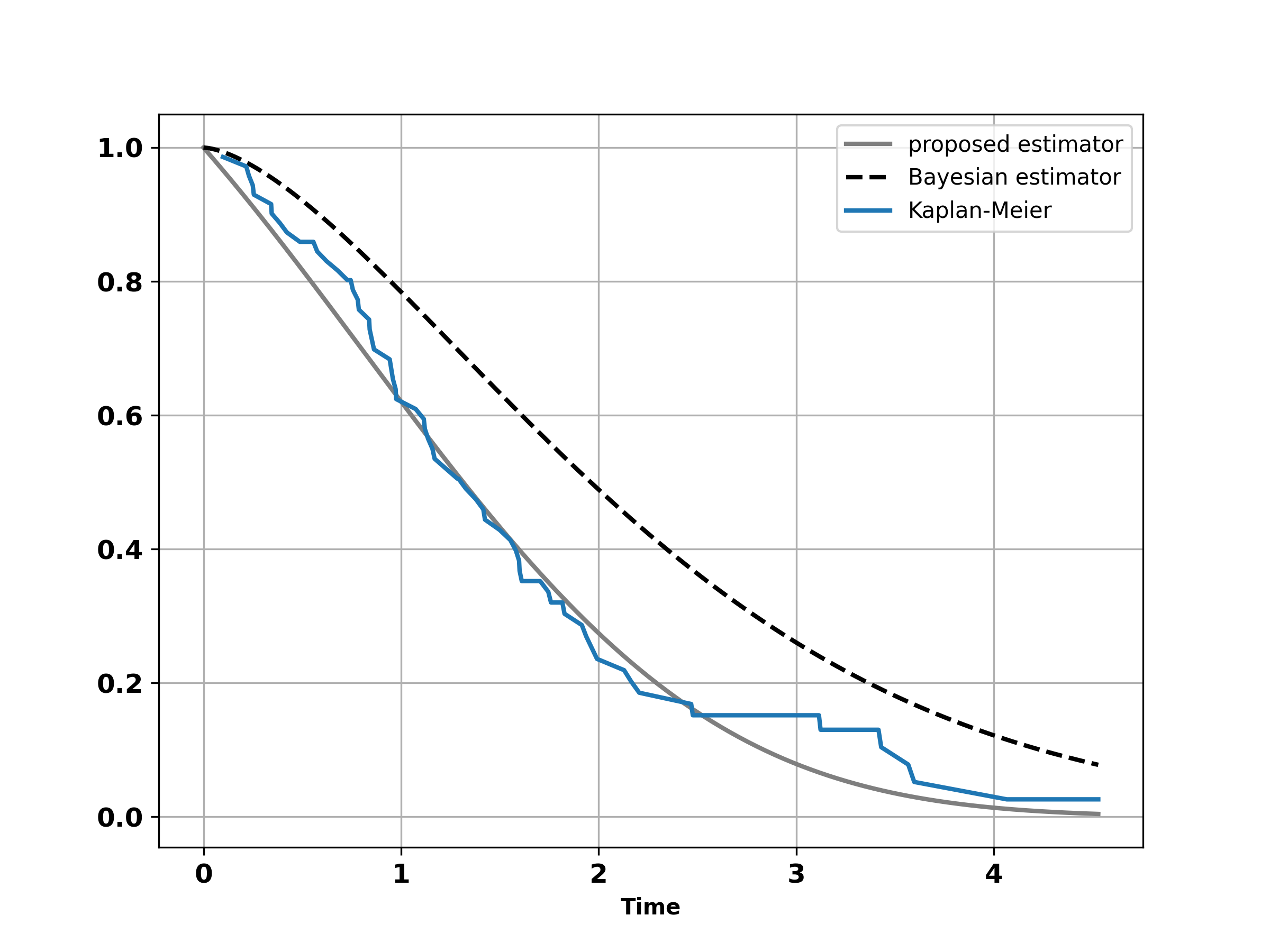} 
    \caption{Results for the survival function estimator of $T$ with the medical data.}
    \label{fig:medical}
\end{figure}

\begin{figure}
    \centering
    \includegraphics[width=0.5\textwidth]{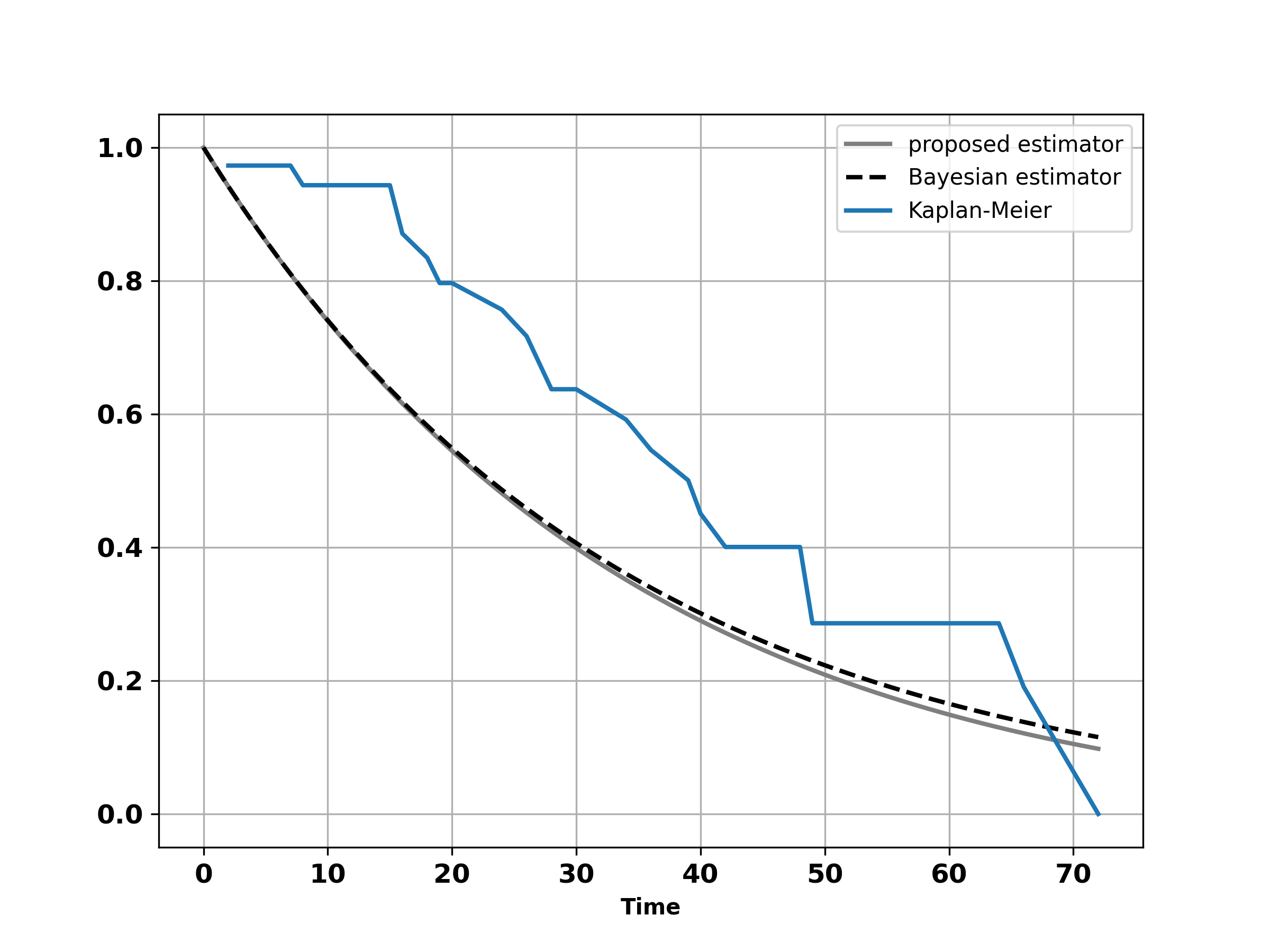} 
    \caption{Results for the survival function estimator of $T$ with UEFA Champion's League data.}
    \label{fig:foot}
\end{figure} 

\section{Conclusion remark}\label{sec:con}
\noindent 
The presence of dependent censoring remains a major obstacle in reliable survival analysis, often leading to biased inference when standard methods are used. In this work, we addressed this issue by introducing a modelling strategy based on the EGMO framework, combined with geometric optimization tools for parameter estimation. This approach allows a more faithful representation of the interaction between failure and censoring mechanisms.
Beyond the methodological construction, we provided theoretical guarantees for the proposed estimators, ensuring their asymptotic validity. The empirical results, supported by simulations and real-data applications, confirm that the method performs consistently well under a range of scenarios.

\section*{Funding}
This research did not receive any specific grant from funding agencies in the public, commercial, or not-for-profit sectors.

\section*{Declaration of interest}
The authors declare that they have no known competing financial interests or personal relationships that could have appeared to influence the work reported in this paper.

%\appendix
\section{Appendix}\label{sec:appendix}

In this section, we present the proofs of all the theorems.
\subsection*{\textbf{Appendix A. Proof of Lemma \ref{lemI}}}
By definition, $q_W^d \in \mathcal{Q}$ if and only if its $\mathbb{L}^2$-norm satisfies
$$
||q_W^d||_{\mathbb{L}^2}^2 = \int_I \big(q_W^d(t)\big)^2 \, dt = 1. 
$$
Let $\Phi(t) = (\phi_1(t), \dots, \phi_d(t))^T$ and $W = (w_1, \dots, w_d)^T$. Then we can write
$$
q_W^d(t) = W^T \Phi(t),
$$
so that
$$
||q_W^d||_{\mathbb{L}^2}^2 
= \int_I \big(W^T \Phi(t)\big)^2 \, \rho(t) dt
= \int_I W^T \Phi(t) \Phi(t)^T W \, \rho(t) dt
= W^T \Big( \int_I \Phi(t) \Phi(t)^T \, \rho(t) dt \Big) W.
$$
Since $(\phi_l)_{l=1}^d$ is an orthonormal basis in $\mathbb{L}^2(I)$, we have
\[
\int_I \Phi(t) \Phi(t)^T \, \rho(t) dt = I_d,
\]
where $I_d$ is the $d \times d$ identity matrix. Therefore,
\[
||q_W^d||_{\mathbb{L}^2}^2 = W^T I_d W = \sum_{l=1}^d w_l^2.
\]
Hence, $q_W^d \in \mathcal{Q}$ if and only if
\[
\sum_{l=1}^{d} w_l^2 = 1.
\]

\subsection*{\textbf{Appendix B. Proof of Lemma \ref{lemII}:}}
\noindent We must show that 
$$
\langle L_* \psi, L_* \phi \rangle_{\mathbb{L}^2} = \mathcal{G}(\psi, \phi)_f,
$$
where $L_*$ is the pushforward (or Jacobian) of $L$
\begin{equation*}
L_* = \frac{dL}{df}(f) = \frac{d(2\sqrt{f})}{df} = \frac{1}{\sqrt{f}}. \label{eq:pushforward}
\end{equation*}
\noindent By direct computation, we have
\begin{align*}
\langle L_* \psi, S_* \phi \rangle_{\mathbb{L}^2} 
&= \int_{\mathcal{D}} (L_* \psi)(t_1,t_2) \, (L_* \phi)(t_1,t_2) \, d t_1 d t_2 \\
&= \int_{\mathcal{D}} \psi(t_1,t_2) \, \frac{1}{\sqrt{f(t_1,t_2)}} \, \phi(t_1,t_2) \, \frac{1}{\sqrt{f(t_1,t_2)}} \, d t_1 d t_2 \\
&= \int_{\mathcal{D}} \psi(t_1,t_2) \, \phi(t_1,t_2) \, \frac{1}{f(t_1,t_2)} \, d t_1 d t_2 \\
&= \mathcal{G} (\psi, \phi)_f.
\end{align*}
\subsection*{\textbf{Appendix C. Proof of Lemma \ref{lrmIII}:}}
Since $(\xi_{l_1,l_2})_{l_1,l_2}$ is an orthonormal basis of $\mathbb{L}^2(\mathcal{D} , \rho)$, we have
$$
\langle \xi_{l_1,l_2}, \xi_{k_1,k_2} \rangle_{\mathbb{L}^2}
= \delta_{l_1,k_1}\delta_{l_2,k_2}
$$
For the truncated function
$
\psi_V^{m,p}(t_1,t_2)
=
\sum_{l_1=1}^m \sum_{l_2=1}^p v_{l_1,l_2}\,\xi_{l_1,l_2}(t_1,t_2),
$
its squared norm is given by
$$
\|\psi_V^{m,p}\|_{\mathbb{L}^2}^2
=
\left\langle
\sum_{l_1,l_2} v_{l_1,l_2}\xi_{l_1,l_2},
\sum_{k_1,k_2} v_{k_1,k_2}\xi_{k_1,k_2}
\right\rangle_{\mathbb{L}^2}
$$
By bilinearity of the inner product and orthonormality, this reduces to
$
\|\psi_V^{m,p}\|_{\mathbb{L}^2}^2
=
\sum_{l_1=1}^m \sum_{l_2=1}^p v_{l_1,l_2}^2
$.
Therefore, $\psi_V^{m,p} \in \mathcal{H}$ if and only if
$
\|\psi_V^{m,p}\|_{\mathbb{L}^2} = 1
$ which is equivalent to $
\sum_{l_1=1}^m \sum_{l_2=1}^p v_{l_1,l_2}^2 = 1$.
\subsection*{\textbf{Appendix D. Proof of Lemma \ref{lemIV}:}}
\noindent
	The geodesic on the sphere with initial data $ A_0 \in \mathcal{S}^{M-1} $ and $  \dot{A}_0 \in T_{A_0}\mathcal{S}^{M-1} $, with norm $ \alpha = || \dot{A}_0||_2 $, has the closed-form expression
	\begin{equation*}
		A(t) = A_0 \cos(\alpha t) + \frac{ \dot{A}_0}{\alpha} \sin(\alpha t).
	\end{equation*}
	Differentiating twice with respect to $ t $, we find
	\begin{equation*}
		\ddot{A}(t) = -\alpha^2 A(t).
	\end{equation*}
	The first derivative of $ F \circ A $ is
	\begin{equation*}
		\frac{d}{dt} F(A(t)) = \nabla F(A(t))^\top \dot{A}(t),
	\end{equation*}
	and the second derivative becomes
	\begin{equation*}
		\frac{d^2}{dt^2} F(A(t)) = \dot{A}(t)^\top \nabla^2 F(A(t)) \dot{A}(t) + \nabla F(A(t))^\top \ddot{A}(t).
	\end{equation*}
	Substituting $ \ddot{A}(t) = -\alpha^2 A(t) $, we obtain
	\begin{align*}
		\frac{d^2}{dt^2} F(A(t)) &= \dot{A}(t)^\top \nabla^2 F(A(t)) \dot{A}(t) - \alpha^2 \nabla F(A(t))^\top A(t) \\
		&= \dot{A}(t)^\top \left[ \nabla^2 F(A(t)) - \nabla F(A(t))^\top A(t) I_M \right] \dot{A}(t).
	\end{align*}
	Evaluating at $ t = 0 $, where $ A(0) = A_0 $ and $ \dot{A}(0) =  \dot{A}_0 $, concludes the proof.

\subsection*{\textbf{Appendix E. Proof of Proposition \ref{existance}:}}
\noindent
The proof of this theorem consists in verifying that the Assumptions \( A_1 \) and \( A_2 \) in \cite{Whi82} are fulfilled.\\
\underline{Proof of Assumption $A_1$} : 
The independent random vectors $(Y_i,\delta_i)$ for $i=1,\dots,n$ are identically distributed with a common measurable Radon-Nikodym density $g$ defined on the Euclidean space $\mathbb{R}^2$ equipped with the product $\sigma$-algebra $\mathcal{B}(\mathbb{R}^2)$ and product measure $\lambda \otimes \mu$, where $\lambda$ is the Lebesgue measure on $(\mathbb{R},\mathcal{B}(\mathbb{R}))$ and $\mu = \delta_0 + \delta_1$ is the sum of Dirac measures at $0$ and $1$.\\
\ \\
\underline{Proof of Assumption $A_2$} : 
 The function \( h \) is measurable with respect to the variables \( (t, \ell) \), for every parameter vector \( \theta  \) belonging to the compact subset $\Theta$ of the Euclidean space \( \mathbb{R}^{q} \) where $q=(m+1)(p+1)+(d+1)$, and continuous with respect to $\theta$, for every $(t,\ell) \in \mathbb{R}^2$. 
\subsection*{\textbf{Appendix F. Proof of Proposition \ref{consistance}:}}
\noindent
To prove this proposition, we verify Assumptions $A_1$–$A_3$ from \cite{Whi82}. Assumptions $A_1$ and $A_2$ follow directly from Proposition~\ref{existance}. It remains to verify assumption $A_3$, which we now address.\\
\ \\
 \underline{Proof of Assumption $A_3(a)$} : 
 We must show that the expectation \( \mathbb{E} \left[ \log g(Y, \delta,f_{1,2},f_3) \right] \) exists and that there exists an integrable function \( m \), with respect to the distribution of \( (Y, \delta) \), such that \( |\log h(y, \ell;\theta)| \leq m(t, \ell) \) for all $\theta \in \Theta$.\\
 \ \\
 $\star$ We start by proving that the expectation \( \mathbb{E} \left[ \log g(Y, \delta,f_{1,2},f_3) \right] \) exists. \\
 \ \\
 \underline{Case $\delta=0$} : We recall that for \( \delta = 0 \), the likelihood  is given by:
 \begin{eqnarray*}
  g(Y,0,f_{1,2},f_3)&=& f_{Y,\delta=0}(t)
 = \overline{F}_3(t) \int_t^{+\infty} f_{1,2}(u,t)du.
 \end{eqnarray*}
 Since $\overline{F}_3(Y) \leq 1$,  it follows that
 \begin{eqnarray*}
  \mathbb{E} \left[g(Y,0,f_{1,2},f_3) \right]&=& \mathbb{E} \left[ \log\left( \overline{F}_3(Y)  \int_Y^{+\infty} f_{1,2}(u,Y)du \right) \right]
  \\& \leq &
  \mathbb{E} \left[ \log\left(   \int_Y^{+\infty} f_{1,2}(u,Y)du \right) \right].
 \end{eqnarray*}
 Since $\log(x) \leq x$  for $x>0$, then 
 \begin{eqnarray}\label{eqI}
  \mathbb{E} \left[g(Y,0,f_{1,2},f_3) \right] \nonumber &\leq & \mathbb{E} \left[    \int_Y^{+\infty} f_{1,2}(u,Y)du  \right]
 \nonumber \\&=& \int_0^{+\infty} \int_t^{+\infty} f_{1,2}(u,t)du  f_Y(t)dt
\nonumber  \\&=& \int_0^{+\infty} \int_t^{+\infty} f_{1,2}(u,t) f_Y(t) du  dt.
  \end{eqnarray}
Based on Equation (\ref{survieY}), we obtain
\begin{eqnarray*}
    f_Y(t)&=& -\frac{\partial \overline{H}(t)}{\partial t}
    \\&=& - \overline{F}_3(t) \frac{\partial  \overline{F}_{1,2}(t,t)}{ \partial t}+ \overline{F}_{1,2}(t,t)f_3(t).
\end{eqnarray*}
Based on Leibniz integral rule, we have
\begin{eqnarray*}
 \frac{\partial  \overline{F}_{1,2}(t,t)}{ \partial t} &=& \frac{\partial}{\partial t} \left[ \int_t^{+\infty} \int_t^{+\infty} f_{1,2}(x,y)dxdy  \right] 
 \\&=& -\int_t^{+\infty} f_{1,2}(x,t)dx- \int_t^{+\infty} f_{1,2}(t,y)dy,
\end{eqnarray*}
which implies that,
\begin{eqnarray}\label{fY}
 f_Y(t)&=& \overline{F}_3(t)  \int_t^{+\infty} f_{1,2}(x,t)dx  + \overline{F}_3(t)\int_t^{+\infty} f_{1,2}(t,y)dy +  \overline{F}_{1,2}(t,t)f_3(t).
\end{eqnarray}
Substituting Equation (\ref{fY}) into Equation (\ref{eqI}), we obtain 
 \begin{eqnarray*}
   \mathbb{E} \left[g(Y,0,f_{1,2},f_3) \right] \nonumber &\leq &  I_1+I_2+I_3, 
 \end{eqnarray*}
 where 
 \begin{align*}
     I_1&= \int_0^{+\infty} \int_t^{+\infty} \int_t^{+\infty} f_{1,2}(u,t) f_{1,2} (x,t) \overline{F}_3(t) dxdudt,\\
     I_2&= \int_0^{+\infty} \int_t^{+\infty} \int_t^{+\infty} f_{1,2}(u,t) f_{1,2} (t,y) \overline{F}_3(t) dydudt,\\
     I_3&= \int_0^{+\infty} \int_t^{+\infty} f_{1,2}(u,t) \overline{F}_{1,2}(t,t)f_3(t) du  dt.
 \end{align*}
Now we verify that all the integrals above are finite. We have 
\begin{eqnarray*}
 I_1 &\leq & \int_0^{+\infty} \int_t^{+\infty} \int_t^{+\infty} f_{1,2}(u,t) f_{1,2} (x,t)  dxdudt
 \\&=&  \int_0^{+\infty} \int_t^{+\infty} f_{1,2} (x,t)  dx  \int_t^{+\infty}  f_{1,2}(u,t)dudt
 \\&=& \int_0^{+\infty} \left[  \int_t^{+\infty} f_{1,2} (x,t)  dx  \right]^2 dt.
\end{eqnarray*}
Using Hölder's inequality,
\begin{eqnarray*}
 I_2 &\leq &  \int_0^{+\infty} \int_t^{+\infty} f_{1,2} (t,y)  dy  \int_t^{+\infty}  f_{1,2}(u,t)dudt
 \\& \leq &  \left( \int_0^{+\infty} \left[ \int_t^{+\infty} f_{1,2} (t,y)  dy \right]^2 dt \right)^{1/2} \left( \int_0^{+\infty} \left[ \int_t^{+\infty}  f_{1,2}(u,t)du  \right]^2 dt \right)^{1/2},
\end{eqnarray*}
and 
\begin{eqnarray*}
 I_3 &\leq & \int_0^{+\infty} f_3(t) \int_t^{+\infty} f_{1,2}(u,t)  du  dt
 \\& \leq & \left( \int_0^{+\infty} f_3(t)^2dt  \right)^{1/2}  \left( \int_0^{+\infty} \left[ \int_t^{+\infty}  f_{1,2}(u,t)du  \right]^2 dt \right)^{1/2}.
\end{eqnarray*}
Based on Assumption $\mathbf{(H_1)}$, the integrals $I_1$, $I_2$ and $I_3$ are finite, which implies that $\mathbb{E}(\log g(Y,0,f_{1,2},f_3))$  exists.  
\\
\ \\
 \underline{Case $\delta=1$} :  We recall that for \( \delta = 1 \), the likelihood  is given by:
 \begin{eqnarray*}
  g(Y,1,f_{1,2},f_3)&=&f_Y(Y)-f_{Y,\delta=0}(Y).
 \end{eqnarray*}
 It follows that, 
 \begin{eqnarray*}
 \mathbb{E} \left[g(Y,0,f_{1,2},f_1) \right]\leq \mathbb{E} \left[f_Y(Y) \right]. 
 \end{eqnarray*}
 Based on Equation (\ref{fY}), we obtain
 \begin{eqnarray*}
  \mathbb{E} \left[g(Y,0,f_{1,2},f_1) \right]\leq  J_1+J_2+J_3, 
 \end{eqnarray*}
 where 
 \begin{align*}
     J_1&=\int_0^{+\infty} f_Y(t) \overline{F}_3(t)  \int_t^{+\infty} f_{1,2}(x,t)dx  dt ,\\
     J_2&= \int_0^{+\infty} f_Y(t) \overline{F}_3(t)\int_t^{+\infty} f_{1,2}(t,y)dy  dt,\\
     J_3&= \int_0^{+\infty} f_Y(t) \overline{F}_{1,2}(t,t)f_3(t) dt.
 \end{align*}
 We have,
 \begin{eqnarray*}
  J_1&=& \int_0^{+\infty} \left[ \overline{F}_3(t)  \int_t^{+\infty} f_{1,2}(x,t)dx  + \overline{F}_3(t)\int_t^{+\infty} f_{1,2}(t,y)dy +  \overline{F}_{1,2}(t,t)f_3(t) \right] \overline{F}_3(t)  \int_t^{+\infty} f_{1,2}(x,t)dx  dt 
  \\&=& \int_0^{+\infty} \overline{F}_3(t)^2 \left(\int_t^{+\infty} f_{1,2}(x,t)dx \right)^2 dt+\int_0^{+\infty} \overline{F}_3(t)^2 \left(\int_t^{+\infty} f_{1,2}(t,y)dy \right)  \left(\int_t^{+\infty} f_{1,2}(x,t)dx \right) dt
  \\&+& \int_0^{+\infty} \overline{F}_{1,2}(t,t) f_3(t) \left( \int_t^{+\infty} f_{1,2}(x,t)dx  \right) dt
  \\& \leq & \int_0^{+\infty}  \left(\int_t^{+\infty} f_{1,2}(x,t)dx \right)^2 dt+\int_0^{+\infty}  \left(\int_t^{+\infty} f_{1,2}(t,y)dy \right)  \left(\int_t^{+\infty} f_{1,2}(x,t)dx \right) dt
  \\&+& \int_0^{+\infty}  f_3(t) \left( \int_t^{+\infty} f_{1,2}(x,t)dx  \right) dt.
 \end{eqnarray*}
Using Hölder's inequality,
\begin{eqnarray*}
 J_1 & \leq & \int_0^{+\infty}  \left(\int_t^{+\infty} f_{1,2}(x,t)dx \right)^2 dt +\left( \int_0^{+\infty} \left[ \int_t^{+\infty} f_{1,2} (t,y)  dy \right]^2 dt \right)^{1/2} \left( \int_0^{+\infty} \left[ \int_t^{+\infty}  f_{1,2}(x,t)dx  \right]^2 dt \right)^{1/2}
 \\&+& \left( \int_0^{+\infty} f_3(t)^2dt  \right)^{1/2}  \left( \int_0^{+\infty} \left[ \int_t^{+\infty}  f_{1,2}(u,t)du  \right]^2 dt \right)^{1/2}. 
\end{eqnarray*}
Based on Assumption $\mathbf{(H_1)}$, the integral $J_1$ is finite. Similar reflections ensure that the integral $J_2$ is finite. Now for the integral $J_3$, we obtain,
\begin{eqnarray*}
 J_3 &=&   \int_0^{+\infty} \left[ \overline{F}_3(t)  \int_t^{+\infty} f_{1,2}(x,t)dx  + \overline{F}_3(t)\int_t^{+\infty} f_{1,2}(t,y)dy +  \overline{F}_{1,2}(t,t)f_3(t) \right] \overline{F}_{1,2}(t,t) f_3(t)  dt  \\& \leq & \int_0^{+\infty}  f_3(t) \left( \int_t^{+\infty} f_{1,2}(x,t)dx  \right) dt+ \int_0^{+\infty}  f_3(t) \left( \int_t^{+\infty} f_{1,2}(t,y)dy  \right) dt + \int_0^{+\infty} f_3(t)^2dt. 
\end{eqnarray*}
Using Hölder's inequality and based on Assumption $\mathbf{(H_1)}$, the integral $J_3$ is finite. Therefore, combining all the terms, we conclude that $\displaystyle{\mathbb{E} \left[\log g(Y,1,f_{1,2},f_3)\right] < \infty}$.\\
\ \\
$\star \star$ We now turn to establishing a bound on $|\log h(y, \ell;\theta)|$ by an integrable function with respect to $g$, for all $\theta \in \Theta$.  \\
 \ \\
\underline{Case $\ell=0$}: When $\ell=0$, we have 
$$\left| \log \left(h(t,\ell;\theta) \right) \right|=\left| \log \left( \widetilde{f}_{Y,\delta=0}(t) \right) \right|.$$
\vspace{0.2cm}
\noindent\underline{Subcase 1}: 
 If $h(t,\ell; \theta)<1$, then
$$\left|\log(h(t,\ell;\theta))\right|=-\log(h(t,\ell;\theta).$$
Define the following bounds of parameters space:
\[
V_{k,\min} = \min \{ V_k \in \Theta_1 \}, \quad V_{k,\max} = \max \{ V_k \in \Theta_1 \}, \quad \text{for } k \in \{0, \dots, mp\},
\]

\[
W_{k,\min} = \min \{ W_k \in \Theta_2 \}, \quad W_{k,\max} = \max \{ W_k \in \Theta_2 \}, \quad \text{for } k \in \{0, \dots, d \}
\]
For each $k \in \{1\dots mp\}$ and for each $j \in \{1\dots d\}$, we have 
$$ V_{k,min} < V_k < V_{k,max} \qquad \text{and} \qquad W_{j,\min} < V_j < V_{j,\max},$$
which  implies the following bounds:
\begin{eqnarray*}
 \overline{F}_{3,d}(t|\overline{V}_{d,\min}) < \overline{F}_{3,d}(t|\overline{V}_d) <  \overline{F}_{3,d}(t|\overline{V}_{d,\max}), 
\end{eqnarray*}
and 
\begin{eqnarray*}
 f_{1,2,mp}(t,s|\overline{W}_{mp,\min}) < f_{1,2,mp}(t,s|\overline{W}_{mp}) <  \overline{F}_{1,2,mp}(t,s|\overline{W}_{mp,\max}).
\end{eqnarray*}
Then 
\begin{eqnarray*}
 \int_t^b f_{1,2,mp} (u,t|\overline{W}_{mp,\min})du<  \int_t^b f_{1,2,mp} (u,t|\overline{W}_{mp})du < \int_t^b f_{1,2,mp} (u,t|\overline{W}_{mp,\max})du.
\end{eqnarray*}
Then 
\begin{eqnarray}\label{eqmaj0}
 \overline{F}_{3,d}(t|\overline{V}_{d,\min}) \int_t^{b} f_{1,2,mp}(u,t|\overline{W}_{mp,\min})du & \leq &
 \overline{F}_{3,d}(t|\overline{V}_{d}) \int_t^{b} f_{1,2,mp}(u,t|\overline{W}_{m,p})du  
\nonumber\\& \leq & \overline{F}_{3,d}(t|\overline{V}_{d,\max}) \int_t^{b} f_{1,2,mp}(u,t|\overline{W}_{mp,\max})du.
\end{eqnarray}
Since, we have
\begin{eqnarray*}
\log(h(t,\ell;\theta))=\log\left(\widetilde{f}_{Y,\delta=0}(t) \right) &=& \log \left( \overline{F}_{3,d}(t|\overline{V}_{d}) \int_t^{b} f_{1,2,mp}(u,t|\overline{W}_{mp})du \right),
\end{eqnarray*}
we obtain the following inequality
\begin{eqnarray*}
 \log \left( \overline{F}_{3,d}(t|\overline{V}_{d,\min}) \int_t^{b} f_{1,2,mp}(u,t|\overline{W}_{mp,\min})du \right) & \leq & \log(h(t,\ell;\theta)) 
 \\& \leq &  \log \left( \overline{F}_{3,d}(t|\overline{V}_{d,\max}) \int_t^{b} f_{1,2,mp}(u,t|\overline{W}_{mp,\max})du \right)
\end{eqnarray*}
Then, 
\begin{eqnarray*}
-  \log \left( \overline{F}_{3,d}(t|\overline{V}_{d,\min}) \int_t^{b} f_{1,2,mp}(u,t|\overline{W}_{mp,\min})du \right) & \geq & - \log(h(t,\ell;\theta)) 
 \\& \geq &  - \log \left( \overline{F}_{3,d}(t|\overline{V}_{d,\max}) \int_t^{b} f_{1,2,mp}(u,t|\overline{W}_{mp,\max})du \right)
\end{eqnarray*}
Hence, there exists an integrable function 
$$m(t):=-  \log \left( \overline{F}_{3,d}(t|\overline{V}_{d,\min}) \int_t^{b} f_{1,2,mp}(u,t|\overline{W}_{mp,\min})du \right)$$ such that:
$$|\log (h(t,\ell;\theta)) | \leq m(t).$$
\vspace{0.2cm}
\\
\noindent\underline{Subcase 2}: If $h(t,\ell; \theta)>1$, then
$$\left|\log(h(t,\ell;\theta))\right|=\log(h(t,\ell;\theta).$$
\begin{eqnarray*}
\log(h(t,\ell;\theta))=\log\left(\widetilde{f}_{Y,\delta=0}(t) \right) &=& \log \left( \overline{F}_{3,d}(t|\overline{V}_{d}) \int_t^{b} f_{1,2,mp}(u,t|\overline{W}_{mp})du \right)
\\& \leq & \log \left(  \int_t^{b} f_{1,2,mp}(u,t|\overline{W}_{mp})du \right)
\\ &\leq &   \int_t^{b} \left[\sum_{i=1}^m \sum_{j=1}^p v_{i,j} \xi_{i,j}(u,t) \right]^2du
\\ &\leq &   \int_t^{b} \left[\sum_{i=1}^{+\infty} \sum_{j=1}^{+\infty} v_{i,j} \xi_{i,j}(u,t) \right]^2du
\\ &=&   \int_t^{b} \left[f_{1,2}(u,t) \right]^2du < \infty,
\end{eqnarray*}
since $f_{1,2} \in \mathbb{L}^2(D)$. 
\\
\ \\
\underline{Case $\ell=1$}:  In this case, we have
\begin{eqnarray*}
 h(t,\ell; \theta)&=& \widetilde{f}_{Y,n}(t)-\widetilde{f}_{Y,\delta=0}^{m,p,d}(t) 
 \\&=&  \widetilde{f}_{Y,n}(t)- \overline{F}_{3,W}^d(t) \int_t^{b} f_{1,2,V}^{m,p}(u,t)du.
\end{eqnarray*}
Using the bounds established in Equation (\ref{eqmaj0}), we deduce that

\begin{eqnarray*}
 \lefteqn{m(t):=\widetilde{f}_{Y,n}(t)-\overline{F}_{3,d,\max}(t|\overline{V}) \int_t^{b} f_{1,2,m,p,\max}(u,t|\overline{W})du}
  \\& \leq &
  h(t,\ell;\theta)
 \nonumber\\& \leq & 
 \widetilde{f}_{Y,n}(t)- \overline{F}_{3,d,\min}(t|\overline{V}) \int_t^{b} f_{1,2,m,p,\min}(u,t|\overline{W})du:=n(t).
\end{eqnarray*}
Taking the logarithm, we obtain:
It follows that
\[
\left| \log(h(t, \ell; \theta)) \right| \leq \max \left\{ \left| \log(m(t)) \right|, \left| \log(n(t)) \right| \right\}.
\]
Therefore, \( \left| \log(h(t, \ell; \theta)) \right| \) is uniformly bounded by an integrable function  for all \( \theta \). This completes the verification of condition \( A_3(a) \).\\
 \ \\
 \underline{Proof of Assumption $A_3(b)$} : 
 We need to show that $\text{KL}$ has unique minimum at $\theta^{\star}$.
The objective of this proof is to show that for $m,p,d \to \infty$, $\text{KL}(g||h;\theta)$ has unique minimum at $\theta^{\star}$, 
where 
First, we note that $$\text{KL}(g||h;\theta)=\mathbb{E} \left( \log \left[ \frac{g(Y,\delta;f_{1,2},f_3)}{h(Y,\delta;\theta) } \right]\right) \geq 0.$$  
Then the $\displaystyle{\text{KL}(g \| h; \theta)}$ divergence is minimized when  $f$ exactly matches $g$. Let $$\theta^{\star}:=(\theta_1^{\star},\dots,\theta_q^{\star})=\left((\mathcal{R}_{i,j}^{\star})_{1 \leq i \leq m, 1 \leq j \leq p}, (\mathcal{T}_{k}^{\star})_{1 \leq k \leq d} \right),$$
where $$\mathcal{T}_{k}^{\star} =
\big<\overline{F}_3,\phi \big>_{\mathbb{L}^2}=\int_{I} \overline{F}_3(t) \phi(t) \mathrm{d}t,$$
and 
$$\mathcal{R}_{i,j}^{\star} = \big<f_{1,2},\xi \big>_{\mathbb{L}^2}=\int_{D} f_{1,2}(t_1,t_2) \xi(t_1,t_2) \mathrm{d}t_1 \mathrm{d}t_2$$
such that $$ \sum_{i=1}^d (\mathcal{T}_i^{\star})^2=1 \quad \text{and} \quad  \sum_{i=1}^m \sum_{j=1}^p (\mathcal{R}_{i,j}^{\star})^2=1.$$
We have 
\begin{eqnarray}\label{lim1}
 \lim\limits_{m,p \to \infty} \sum_{i=0}^m \sum_{j=0}^p  \mathcal{R}_{i,j}^{\star} \xi_{i,j}(x,y)=f_{1,2} (x,y), \quad \quad \lim\limits_{d \to \infty} \sum_{k=0}^d  \mathcal{T}_{k}^{\star} \phi_k(x)=\overline{F}_3(x),   
\end{eqnarray}
and then 
$$\lim\limits_{m,p,d \to \infty} h(Y,\delta;\theta^{\star})=g(Y,\delta;f_{1,2},f_3) .$$ Then $\displaystyle{\text{KL}(g \| h; \theta)}$ divergence is minimized at $\theta^{\star}$. The uniqueness of $\theta^{\star}$  is guaranteed by the uniqueness of the projection in $\mathbb{L}^2$.
\subsection*{\textbf{Appendix G. Proof of Proposition \ref{aymnormal}}}
To establish this proposition, we verify Assumptions A1–A6 of \cite{Whi82}. Assumptions A1–A3 follow directly from Proposition 2. It remains to check Assumptions A4 and A5 of \cite{Whi82}.\\
\ \\
\underline{Proof of Assumption $A_4$} : 
We first express the log-likelihood as:
\begin{eqnarray*}
\log(h(t,\ell;\theta))&=& \ell \log \left[ \widetilde{f}_{Y,n}(t)-\widetilde{f}_{Y,\delta=0}^{m,p,d}(t)\right]+(1-\ell)\log \left[\widetilde{f}_{Y,\delta=0}^{m,p,d}(t)  \right]
\\&=&  \ell \log \left[ \widetilde{f}_{Y,n}(t)- \overline{F}_{3,W}^d(t) \int_t^{b} f_{1,2,V}^{m,p}(u,t)du\right]+(1-\ell)\log \left[  \overline{F}_{3,W}^d(t) \int_t^{b} f_{1,2,V}^{m,p}(u,t)du  \right].
\end{eqnarray*}
Next, we compute the partial derivatives of $\log(h(t,\ell;\theta))$ with respect to each parameter. Let 
$$\displaystyle{B:=\overline{F}_{3,W}^d(t) \int_t^{b} f_{1,2,V}^{m,p}(u,t)du}.$$ Derivative
with respect to $w_k$ for  $k=1\dots d$:
\begin{eqnarray*}
 \frac{\partial \log(h(t,\ell;\theta))  }{\partial w_k} &=&  
 \ell \dfrac{- \frac{\partial B}{\partial w_k} }{B}+ (1-\ell) \dfrac{ \frac{\partial B}{\partial w_k} }{B} 
 \\&=& \ell \dfrac{-  2 \int_t^{m,p} (u,y)du M(y) w_k}{B}+ (1-\ell) \dfrac{  2 \int_t^{m,p} (u,y)du M(y) w_k }{B} 
 \\&=& \dfrac{ 2 \int_t^b f_{1,2,V}^{m,p} (u,y)du M(y) w_k (1-2\ell)}{\overline{F}_{3,W}^d(t) \int_t^{b} f_{1,2,V}^{m,p}(u,t)du}
 \\&=& \dfrac{ 2 M(y) w_k (1-2\ell)}{\overline{F}_{3,W}^d(t)}.
\end{eqnarray*}
Derivative with respect to $v_{i,j}$ for  $i=1 \dots m$  and $j=1 \dots p$
\begin{eqnarray*}
  \frac{\partial \log(h(t,\ell;\theta))  }{\partial v_{i,j}} &=&  \ell \dfrac{- \frac{\partial B}{\partial v_{i,j}} }{B}+ (1-\ell) \dfrac{ \frac{\partial B}{\partial v_{i,j}} }{B} 
  \\&=& \ell \dfrac{- \frac{\partial B}{\partial v_{i,j}} }{B}+ (1-\ell) \dfrac{ \frac{\partial B}{\partial v_{i,j}} }{B} 
\end{eqnarray*}
Or based on Equation (\ref{partialBV}), we have  
\begin{eqnarray*}
 \frac{\partial \log(h(t,\ell;\theta))  }{\partial v_{i,j}} &=&  \ell \dfrac{- 	\overline{F}_{3,W}^d(t)
	\times 2 \int_{t}^{b} \left( V^T \boldsymbol{\xi}(u, t) \right) \times
	\boldsymbol{\xi}(u, t) \, du }{\overline{F}_{3,W}^d(t) \int_t^{b} f_{1,2,V}^{m,p}(u,t)du}+ (1-\ell) \dfrac{	\overline{F}_{3,W}^d(t)
	\times 2 \int_{t}^{b} \left( V^T \boldsymbol{\xi}(u, t) \right) \times
	\boldsymbol{\xi}(u, y) \, du }{\overline{F}_{3,W}^d(t) \int_t^{b} f_{1,2,V}^{m,p}(u,t)du} 
 \\&=& \ell \dfrac{- 	
 2 \int_{t}^{b} \left( V^T \boldsymbol{\xi}(u, t) \right) \times
	\boldsymbol{\xi}(u, t) \, du }{ \int_t^{b} f_{1,2,V}^{m,p}(u,t)du}+ (1-\ell) \dfrac{
 2 \int_{t}^{b} \left( V^T \boldsymbol{\xi}(u, t) \right) \times
	\boldsymbol{\xi}(u, t) \, du }{ \int_t^{b} f_{1,2,V}^{m,p}(u,t)du} 
	\\&=&   \dfrac{2 \int_{t}^{b} \left( V^T \boldsymbol{\xi}(u, t) \right) \times
	\boldsymbol{\xi}(u, t) \, du (1-2\ell)}{ \int_t^{b} f_{1,2,V}^{m,p}(u,t)du}
\end{eqnarray*}
The functions \( \frac{\partial \log(h(t,\ell;\theta))  }{\partial w_k} \) for \( k = 0, \dots, m \) and \( \frac{\partial \log(h(t,\ell;\theta))  }{\partial v_{i,j}}\), for  $i=1 \dots m$  and $j=1 \dots p$, are measurable functions of \( (t,\ell) \) for each fixed \( \theta \in \Theta_1 \times \Theta_2\), and are continuously differentiable functions of \( \theta \) for every \( (t,\ell) \in \, ]0, +\infty[ \times \mathbb{R} \). Therefore, Assumption \( A_4 \) is satisfied.\\
\ \\
\underline{Proof of Assumption $A_5$:  }  
We first consider second-order derivatives with respect to the parameter $V$.
The first-order derivative with respect to the entry $v_{i,j}$ of the matrix $V$
is given by
\[
\frac{\partial}{\partial v_{i,j}}\log h(t,\ell;\theta)
=
\frac{2(1-2\ell)}{\int_t^b f_{1,2,V}^{m,p}(u,t)\,du}
\int_t^b \big(V^\top\boldsymbol{\xi}(u,t)\big)\,
\xi_{i,j}(u,t)\,du.
\]
Differentiating again with respect to $v_{k,r}$ yields
\[
\frac{\partial^2}{\partial v_{i,j}\partial v_{k,r}}
\log h(t,\ell;\theta)
=
2(1-2\ell)
\left[
\frac{
\int_t^b \xi_{k,r}(u,t)\,\xi_{i,j}(u,t)\,du
}{
\int_t^b f_{1,2,V}^{m,p}(u,t)\,du
}
-
\frac{
\int_t^b (V^\top\boldsymbol{\xi}(u,t))\xi_{i,j}(u,t)\,du
\;
\int_t^b 2(V^\top\boldsymbol{\xi}(u,t))\xi_{k,r}(u,t)\,du
}{
\left(\int_t^b f_{1,2,V}^{m,p}(u,t)\,du\right)^2
}
\right].
\]
We now bound this expression. Since the basis functions $\boldsymbol{\xi}(u,t)$
are assumed to be bounded on the compact domain $[a,b]^2$, the products
$\xi_{i,j}(u,t)\xi_{k,r}(u,t)$ and
$(V^\top\boldsymbol{\xi}(u,t))\xi_{i,j}(u,t)$
are bounded functions of $(u,t)$.
Because the integration interval $[t,b]$ is finite, the integrals appearing in
the numerator are bounded functions of $t$.
Moreover, by Assumption $A_4$, the quantity
\[
\int_t^b f_{1,2,V}^{m,p}(u,t)\,du
\]
is uniformly bounded away from zero for all admissible $V$ and all $t$.
Consequently, the ratio defining the second derivative is bounded by a function
of $t$ that depends only on suprema of bounded integrands over finite intervals.
We denote this dominating function by
\[
g_V(t)
=
2\left|
\frac{
\int_t^b \xi_{k,r}(u,t)\,\xi_{i,j}(u,t)\,du
}{
\int_t^b f_{1,2,V}^{m,p}(u,t)\,du
}
\right|
+
2\left|
\frac{
\int_t^b (V^\top\boldsymbol{\xi}(u,t))\xi_{i,j}(u,t)\,du
\;
\int_t^b 2(V^\top\boldsymbol{\xi}(u,t))\xi_{k,r}(u,t)\,du
}{
\left(\int_t^b f_{1,2,V}^{m,p}(u,t)\,du\right)^2
}
\right|.
\]
Observe that the functions $\xi_{i,j}(u,t)$ are bounded on the compact domain
$[a,b]^2$. Since the parameter matrix $V$ belongs to a compact set, the quantity
$V^\top \boldsymbol{\xi}(u,t)$ is uniformly bounded in $(u,t)$. It follows that the products $\xi_{i,j}(u,t)\xi_{k,r}(u,t)$ and $(V^\top\boldsymbol{\xi}(u,t))\xi_{i,j}(u,t)$ are bounded functions. Because the integrals are taken over the finite interval $[t,b]$, the resulting
integrals are bounded functions of $t$. Moreover, Assumption~A4 ensures that
$\int_t^b f_{1,2,V}^{m,p}(u,t)\,du$ is finite and uniformly bounded away from zero.
Hence, $g_V(t)$ is bounded on the support of $t$, which implies that it is
integrable with respect to $G$.\\
\ \\
We next consider second-order derivatives with respect to $W$.
The first-order derivative with respect to $w_k$ is
\[
\frac{\partial}{\partial w_k}\log h(t,\ell;\theta)
=
\frac{2(1-2\ell) M(t)w_k}{\overline F_{3,W}^d(t)},
\qquad
\overline F_{3,W}^d(t)=W^\top M(t)W.
\]
Differentiating with respect to $w_m$ gives
\[
\frac{\partial^2}{\partial w_k\partial w_m}
\log h(t,\ell;\theta)
=
2(1-2\ell)
\left[
\frac{M(t)\delta_{k,m}}{\overline F_{3,W}^d(t)}
-
\frac{M(t)w_k \cdot 2(M(t)W)_m}{(\overline F_{3,W}^d(t))^2}
\right].
\]
Since $\Phi$ is bounded, the matrix-valued function
$M(t)=\int_t^b \Phi(u)\Phi(u)^\top du$ is bounded for all $t$.
Moreover, Assumption~A4 ensures that $\overline F_{3,W}^d(t)$ is uniformly bounded
away from zero.
Therefore, the above expression is bounded in absolute value by
\[
g_W(t)
=
2\left|
\frac{M(t)}{\overline F_{3,W}^d(t)}
\right|
+
2\left|
\frac{M(t)M(t)W}{(\overline F_{3,W}^d(t))^2}
\right|.
\]
Since the function $\Phi$ is bounded on $[a,b]$, the matrix $M(t)$ is bounded for all $t$. The compactness of the parameter space for $W$ implies that both $W$ and $M(t)W$ are bounded vectors. By Assumption $A_4$, the quantity
$\overline F_{3,W}^d(t)$ is finite and bounded away from zero. Therefore, each term in the expression of $g_W(t)$ is bounded on the support of
$t$. This implies that $g_W(t)$ is integrable with respect to $G$.\\
\ \\
Finally, we consider the mixed second-order derivatives.
Using the expression of $B(t;\theta)$,
\[
B(t;\theta)
=
\overline F_{3,W}^d(t)\int_t^b f_{1,2,V}^{m,p}(u,t)\,du,
\]
we obtain
\[
\frac{\partial^2}{\partial v_{i,j}\partial w_k}
\log h(t,\ell;\theta)
=
-4(1-2\ell)
\frac{
\int_t^b (V^\top\boldsymbol{\xi}(u,t))\xi_{i,j}(u,t)\,du
\cdot (M(t)W)_k
}{
\overline F_{3,W}^d(t)
\int_t^b f_{1,2,V}^{m,p}(u,t)\,du
}.
\]
Each term in the numerator is bounded as argued above, while the denominator
is uniformly bounded away from zero by Assumption $A_4$. Hence,
\[
\left|
\frac{\partial^2}{\partial v_{i,j}\partial w_k}
\log h(t,\ell;\theta)
\right|
\le
g_{VW}(t),
\]
where
\[
g_{VW}(t)
=
4\left|
\frac{
\int_t^b |(V^\top\boldsymbol{\xi}(u,t))\xi_{i,j}(u,t)|\,du
\cdot |(M(t)W)_k|
}{
\overline F_{3,W}^d(t)
\int_t^b f_{1,2,V}^{m,p}(u,t)\,du
}
\right|.
\]
Using the previous bounds, the integral
$\int_t^b (V^\top\boldsymbol{\xi}(u,t))\xi_{i,j}(u,t)\,du$ is a bounded function of
$t$. In addition, the boundedness of $M(t)$ and the compactness of the parameter space for $W$ imply that $(M(t)W)_k$ is bounded.
Together with Assumption $A_4$, which guarantees that the denominator is finite and uniformly bounded away from zero, we conclude that $g_{VW}(t)$ is bounded on the support of $t$. Consequently, $g_{VW}(t)$ is integrable with respect to $G$.\\
\ \\
We now consider the products of the first-order derivatives with respect to the
parameters of $V$. Using the expression obtained previously, we have
\[
\frac{\partial}{\partial v_{i,j}}\log h(t,\ell;\theta)
=
\frac{2(1-2\ell)}{\int_t^b f_{1,2,V}^{m,p}(u,t)\,du}
\int_t^b (V^\top\boldsymbol{\xi}(u,t))\xi_{i,j}(u,t)\,du.
\]
Therefore,
\[
\left|
\frac{\partial}{\partial v_{i,j}}\log h(t,\ell;\theta)
\frac{\partial}{\partial v_{k,r}}\log h(t,\ell;\theta)
\right|
=
\frac{
4\left|
\int_t^b (V^\top\boldsymbol{\xi}(u,t))\xi_{i,j}(u,t)\,du
\int_t^b (V^\top\boldsymbol{\xi}(u,t))\xi_{k,r}(u,t)\,du
\right|
}{
\left(\int_t^b f_{1,2,V}^{m,p}(u,t)\,du\right)^2
}.
\]
As shown above, the integrals in the numerator are bounded functions of $t$,
while the denominator is finite and uniformly bounded away from zero by
Assumption $A_4$. Hence, the above product is bounded by a function of $t$.
We denote this dominating function by
\[
\tilde g_V(t)
=
\frac{
4\left|
\int_t^b (V^\top\boldsymbol{\xi}(u,t))\xi_{i,j}(u,t)\,du
\int_t^b (V^\top\boldsymbol{\xi}(u,t))\xi_{k,r}(u,t)\,du
\right|
}{
\left(\int_t^b f_{1,2,V}^{m,p}(u,t)\,du\right)^2
}.
\]
Using the same boundedness arguments as before, it follows that $\tilde g_V(t)$ is bounded on the support of $t$ and thus integrable with respect to $G$.\\
\ \\
We next consider the products of the first-order derivatives with respect to the
parameters of $W$. Recall that
\[
\frac{\partial}{\partial w_k}\log h(t,\ell;\theta)
=
\frac{2(1-2\ell) M(t)w_k}{\overline F_{3,W}^d(t)}.
\]
Hence,
\[
\left|
\frac{\partial}{\partial w_k}\log h(t,\ell;\theta)
\frac{\partial}{\partial w_m}\log h(t,\ell;\theta)
\right|
=
\frac{
4(1-2\ell)^2 |M(t)w_k M(t)w_m|
}{
(\overline F_{3,W}^d(t))^2
}.
\]
Since $M(t)$ and $W$ are bounded and $\overline F_{3,W}^d(t)$ is finite and
uniformly bounded away from zero by Assumption $A_4$, this product is bounded by a
function of $t$. We define the dominating function as
\[
\tilde g_W(t)
=
\frac{
4|M(t)W|^2
}{
(\overline F_{3,W}^d(t))^2
}.
\]
As previously argued, $\tilde g_W(t)$ is bounded on the support of $t$ and hence integrable with respect to $G$.\\
\ \\
Finally, we consider the products of first-order derivatives involving both $V$
and $W$. We obtain
\[
\left|
\frac{\partial}{\partial v_{i,j}}\log h(t,\ell;\theta)
\frac{\partial}{\partial w_k}\log h(t,\ell;\theta)
\right|
=
\frac{
4\left|
\int_t^b (V^\top\boldsymbol{\xi}(u,t))\xi_{i,j}(u,t)\,du
\cdot M(t)w_k
\right|
}{
\overline F_{3,W}^d(t)
\int_t^b f_{1,2,V}^{m,p}(u,t)\,du
}.
\]
The numerator is bounded as previously shown, while the denominator is finite
and uniformly bounded away from zero by Assumption $A_4$. Hence, this product is
bounded by a function of $t$, denoted by
\[
\tilde g_{VW}(t)
=
\frac{
4\left|
\int_t^b (V^\top\boldsymbol{\xi}(u,t))\xi_{i,j}(u,t)\,du
\cdot M(t)w_k
\right|
}{
\overline F_{3,W}^d(t)
\int_t^b f_{1,2,V}^{m,p}(u,t)\,du
}.
\]
Using the same arguments as above, $\tilde g_{VW}(t)$ is bounded and integrable with respect to $G$, which completes the verification of Assumption $A_5$.

\bibliography{bibliography}

@article{esc24,
  title={Dependent censoring with simultaneous death times based on the Generalized Marshall--Olkin model},
  author={Escobar-Bach, Mikael and Helali, Salima},
  journal={Journal of Multivariate Analysis},
  volume={204},
  pages={105347},
  year={2024},
  publisher={Elsevier}
}

@article{pin15,
  title={Extended Marshall--Olkin model and its dual version},
  author={Pinto, Jayme and Kolev, Nikolai},
  booktitle={Marshall ̶ Olkin Distributions-Advances in Theory and Applications: Bologna, Italy, October 2013},
  pages={87--113},
  year={2015},
  publisher={Springer}
}

@article{Whi82,
  author = {White, Halbert},
  year = {1982},
  title = {Maximum likelihood estimation of misspecified models},
  journal = {Econometrica},
  volume = {50},
  pages = {1--25}
}

@article{Tsia75,
  author = {Tsiatis, Anastasios A},
  year = {1975},
  title = {A nonidentifiability aspect of the problem of competing risks},
  journal = {Proc. Natl. Acad. Sci. USA},
  volume = {72},
  pages = {20--22}
}

@book{Kal02,
  title={The statistical analysis of failure time data},
  author={Kalbfleisch, John D and Prentice, Ross L},
  year={2002},
  publisher={John Wiley \& Sons}
}

@book{kle03,
  title={Survival analysis: techniques for censored and truncated data},
  author={Klein, John P and Moeschberger, Melvin L},
  volume={1230},
  year={2003},
  publisher={Springer}
}

@book{Fle13,
  title={Counting processes and survival analysis},
  author={Fleming, Thomas R and Harrington, David P},
  year={2013},
  publisher={John Wiley \& Sons}
}

@article{Tur76,
  title={The empirical distribution function with arbitrarily grouped, censored and truncated data},
  author={Turnbull, Bruce W},
  journal={Journal of the Royal Statistical Society: Series B (Methodological)},
  volume={38},
  number={3},
  pages={290--295},
  year={1976},
  publisher={Wiley Online Library}
}

@article{Kap58,
  title={Nonparametric estimation from incomplete observations},
  author={Kaplan, Edward L and Meier, Paul},
  journal={Journal of the American statistical association},
  volume={53},
  number={282},
  pages={457--481},
  year={1958},
  publisher={Taylor \& Francis}
}

@article{Cox72,
  title={Regression models and life-tables},
  author={Cox, David R},
  journal={Journal of the royal statistical society: Series B (methodological)},
  volume={34},
  number={2},
  pages={187--202},
  year={1972},
  publisher={Wiley Online Library}
}

@inproceedings{Gil97,
  title={Coarsening at random: Characterizations, conjectures, counter-examples},
  author={Gill, Richard D and Van Der Laan, Mark J and Robins, James M},
  booktitle={Proceedings of the First Seattle Symposium in Biostatistics: Survival Analysis},
  pages={255--294},
  year={1997},
  organization={Springer}
}

@book{Tsi06,
  title={Semiparametric theory and missing data},
  author={Tsiatis, Anastasios A},
  year={2006},
  publisher={Springer}
}

@article{Ibr13,
  title={Bayesian analysis of the Cox model},
  author={Ibrahim, Joseph G and Chen, Ming-Hui and Zhang, Danjie and Sinha, Debajyoti},
  journal={Handbook of survival analysis},
  volume={27},
  year={2013},
  publisher={Chapman \& Hall/CRC Handbooks of Modern Statistical Methods. Chapman \& Hall~…}
}

@article{Tsa85,
  title={A large sample study of generalized maximum likelihood estimators from incomplete data via self-consistency},
  author={Tsai, Wei-Yann and Crowley, John},
  journal={The Annals of Statistics},
  pages={1317--1334},
  year={1985},
  publisher={JSTOR}
}

@article{Rob94,
  title={Estimation of regression coefficients when some regressors are not always observed},
  author={Robins, James M and Rotnitzky, Andrea and Zhao, Lue Ping},
  journal={Journal of the American statistical Association},
  volume={89},
  number={427},
  pages={846--866},
  year={1994},
  publisher={Taylor \& Francis}
}

@book{Laa03,
  title={Unified methods for censored longitudinal data and causality},
  author={Laan, Mark J and Robins, James M},
  year={2003},
  publisher={Springer}
}

@article{Mar05,
  title={A new method for adding a parameter to a family of distributions with application to the exponential and weibull families},
  author={Marshall, Albert W and Olkin, Ingram},
  journal={Biometrika},
  volume={92},
  number={2},
  pages={505--505},
  year={2005},
  publisher={Biometrika Trust}
}

@article{Dia16,
  title={Exponentiated Marshall-Olkin family of distributions},
  author={B. Dias, C{\'\i}cero R and Cordeiro, Gauss M and Alizadeh, Morad and Diniz Marinho, Pedro Rafael and Campos Co{\^e}lho, Hem{\'\i}lio Fernandes},
  journal={Journal of Statistical Distributions and Applications},
  volume={3},
  number={1},
  pages={15},
  year={2016},
  publisher={Springer}
}

@misc{Mar00,
  title={Directional statistics. Chichester: John willey and sons},
  author={Mardia, K and Jupp, P},
  year={2000},
  publisher={Inc}
}

@InProceedings{pmlr-v124-holbrook20a,
  title = 	 {Nonparametric {F}isher geometry with application to density estimation},
  author =       {Holbrook, Andrew and Lan, Shiwei and Streets, Jeffrey and Shahbaba, Babak},
  booktitle = 	 {Proceedings of the 36th Conference on Uncertainty in Artificial Intelligence (UAI)},
  pages = 	 {101--110},
  year = 	 {2020},
  volume = 	 {124},
  series = 	 {Proceedings of Machine Learning Research},
  publisher =    {PMLR}
}

@article{Fradi2024,
  author    = {Fradi, Anis and Samir, Chafik and Adouani, Ines},
  title     = {A new {B}ayesian approach to global optimization on parametrized Surfaces in {$\mathbb{R}^{3}$}},
  journal   = {Journal of Optimization Theory and Applications},
  year      = {2024},
  volume    = {202},
  number    = {3},
  pages     = {1077--1100}
  }

@inbook{Amari-2009,
author = {Amari, Shun-Ichi},
title = {Information geometry and its applications: convex function and dually flat manifold},
year = {2009},
publisher = {Springer-Verlag},
address = {Berlin, Heidelberg},
pages = {75–102},
numpages = {28}
}

@book{Vaart-1998,
place={Cambridge},
series={Cambridge Series in Statistical and Probabilistic Mathematics}, title={Asymptotic statistics}, publisher={Cambridge University Press}, author={Vaart, Aad van der},
year={1998}
}

@article{Hel25,
  title={Extended generalized Marshall--Olkin model for dependent censoring},
  author={Helali, Salima},
  journal={Scandinavian Journal of Statistics},
  year={2025},
  publisher={Wiley Online Library}
}

@article{Cso89,
  title={Testing for exponential and Marshall-Olkin distributions},
  author={Csorgo, Sándor and Welsh, AH},
  journal={Journal of Statistical Planning and Inference},
  volume={23},
  number={3},
  pages={287--300},
  year={1989}
}

@article{Fei15,
  title={Analysis of dependent competing risks in the presence of progressive hybrid censoring using Marshall--Olkin bivariate Weibull distribution},
  author={Feizjavadian, Seyed Hamed and Hashemi, Reza},
  journal={Computational Statistics \& Data Analysis},
  volume={82},
  pages={19--34},
  year={2015},
  publisher={Elsevier}
}

@article{Bar24,
  title={Competing risks analysis for dependent causes using Marshall-Olkin bivariate generalized lifetime family},
  author={Barnwal, Vikas and Panwar, MS},
  journal={Communications in Statistics-Theory and Methods},
  volume={53},
  number={4},
  pages={1212--1240},
  year={2024},
  publisher={Taylor \& Francis}
}

@article{Mei07,
  title={Test of fit for Marshall--Olkin distributions with applications},
  author={Meintanis, Simos G},
  journal={Journal of Statistical Planning and inference},
  volume={137},
  number={12},
  pages={3954--3963},
  year={2007},
  publisher={Elsevier}
}

@article{Ingrid-2024,
author = {Van Keilegom, Ingrid and Kekeç, Elif},
title = {Estimation of the density for censored and contaminated data},
journal = {Stat},
volume = {13},
number = {1},
pages = {e651},
year = {2024}
}
\end{document}